\newcommand{\ind}{\operatorname{\mathbbm{1}}}
\newcommand{\Phidet}{\Phi_\textrm{d}}
\DeclareMathOperator{\prob}{Pr}
\DeclareMathOperator{\round}{round}
\renewcommand{\d}{\mathrm{d}}
\newcommand{\XSet}{\mathbb{X}}
\newcommand{\USet}{\mathbb{U}}
\newcommand{\RSet}{\mathbb{R}}
\newcommand{\NSet}{\mathbb{N}}
\newcommand{\ZSet}{\mathbb{Z}}
\newcommand{\SSet}{\mathbb{S}}
\newcommand{\SnSet}{\mathbb{S}_N}
\newcommand{\CSet}{\mathbb{C}}
\newcommand\blank{{\mkern 2mu\cdot\mkern 2mu}}
\newcommand{\eps}{\varepsilon}
\begin{document}

\title{Macroscopic coherent structures in a stochastic neural network: from
interface dynamics to coarse-grained bifurcation analysis.} 
\author{ Daniele Avitabile
\thanks{Centre for Mathematical Medicine and Biology, School of Mathematical Sciences, University of Nottingham, Nottingham, NG2 7RD, UK }
\and
Kyle Wedgwood
\thanks{Centre for Biomedical Modelling and Analysis, University of Exeter, RILD
  Building, Barrack Road,
Exeter, EX2 5DW, UK}
}

\maketitle

\begin{abstract}
  We study coarse pattern formation in a cellular automaton modelling a
  spatially-extended stochastic neural network. The model, originally proposed
  by Gong and Robinson \cite{Gong2012aa}, is known to support stationary and
  travelling bumps of localised activity. We pose the model on a ring and study
  the existence and stability of these patterns in various limits using a
  combination of analytical and numerical techniques. In a purely deterministic
  version of the model, posed on a continuum, we construct bumps and travelling waves
  analytically using standard interface methods from neural fields theory. In a
  stochastic version with Heaviside firing rate, we construct approximate
  analytical
  probability mass functions associated with bumps and travelling waves. In the
  full stochastic model posed on a discrete lattice, where a coarse analytic
  description is unavailable, we compute patterns and their linear stability
  using equation-free methods. The lifting procedure used in the coarse
  time-stepper is informed by the analysis in the deterministic and stochastic
  limits. In all settings, we identify the synaptic profile as a mesoscopic
  variable, and the width of the corresponding activity set as a macroscopic
  variable. Stationary and travelling bumps have similar meso- and macroscopic
  profiles, but different microscopic structure, hence we propose lifting
  operators which use microscopic motifs to disambiguate between them.
  We provide numerical evidence that waves are supported by a
  combination of high synaptic gain and long refractory times, while
  meandering bumps are elicited by short refractory times.
\end{abstract}

\section{Introduction}
In the past decades, single-neuron recordings have been complemented by
multineuronal experimental techniques, which have provided quantitative evidence
that the cells forming the nervous systems are coupled both
structurally~\cite{Braitenberg:1998iw} and functionally (for a recent review,
see~\cite{yuste2015aa} and references therein). An important question in
neuroscience concerns the relationship between electrical activity at the level
of individual neurons and the emerging spatio-temporal coherent structures
observed experimentally using local field potential
recordings~\cite{Einevoll:2013hp}, functional magnetic resonance
imaging~\cite{vandenHeuvel:2010fz} and
electroencephalography~\cite{Nunez:2006ir}. 

There exist a wide variety of models describing activity at the level of an
individual neuron~\cite{Izhikevich:2007vr,Ermentrout:2010cg}, and major research
efforts in theoretical and computational neuroscience are directed towards coupling
neurons in large-dimensional neural networks, whose behaviour is studied mainly via
direct numerical simulations~\cite{Izhikevich2008aa,Fairhall:2014kb}.

A complementary approach, dating back to Wilson and
Cowan~\cite{Wilson1972aa,Wilson1973aa} and Amari~\cite{Amari1975aa,Amari1977aa}, 
foregoes describing activity at the single neuron level by representing averaged
activity across populations of neurons. These \textit{neural field
models} are nonlocal, spatially-extended, excitable pattern-forming
systems~\cite{Ermentrout:1998ct} which are often analytically tractable and
support several coherent structures such as localised radially-symmetric
states~\cite{Werner2001aa,Laing2002aa,Laing2003aa,Bressloff2011aa,Faye2013aa},
localised patches~\cite{Laing2003q,Rankin:2014bz,Avitabile2015aa},
patterns on lattices with various
symmetries~\cite{Ermentrout:1979dp,Bressloff:2001gt},
travelling bumps and fronts~\cite{Ermentrout:1993kc,Bressloff:2014cm},
rings~\cite{Owen2007aa,Coombes2012aa},
breathers~\cite{Folias2004aa,Folias:2005hq,Folias:2012ez}, 
target patterns~\cite{Coombes2013aa}, 
spiral waves~\cite{Laing2005aa} 
and lurching waves~\cite{Golomb:1999cr,Osan:2001vb,Wasylenko:2010je}
(for comprehensive reviews, we refer the reader
to~\cite{Bressloff2012o,Bressloff:2014cm}).

Recent studies have analysed neural fields with additive
noise~\cite{Hutt2008aa,Faugeras2009aa,Kuehn2014aa},
multiplicative noise~\cite{Bressloff2012ab}, or noisy firing
thresholds~\cite{Brackley2007aa}, albeit these models are still mostly
phenomenological. Even though several papers derive continuum neural fields from
microscopic models of coupled
neurons~\cite{Jirsa:1997kq,Bressloff2009aa,Bressloff:2010jc,Baladron:2012fs},
the development of a rigorous theory of multi-scale brain models is an active
area of research.

Numerical studies of networks based on realistic neural biophysical models rely
almost entirely on brute-force Monte Carlo simulations (for a very recent, remarkable
example, we refer the reader to~\cite{Markram2015}).
With this \textit{direct
numerical simulation}
approach, the stochastic evolution of each neuron in the network is monitored,
resulting in huge computational costs, both in terms of computing time and memory.
From this point of view, multi-scale numerical techniques for neural networks present
interesting open problems.

When few clusters of neurons with similar properties form in the network,
a significant reduction in computational costs can be obtained by population density
methods~\cite{Omurtag:2000dq,Haskell:2001bc}, which evolve probability density
functions of neural subpopulations, as opposed to single neuron trajectories. This
coarse-graining technique is particularly effective when the underlying microscopic
neuronal model has a low-dimensional state space (such as the leaky
integrate-and-fire model) but its performance degrades for more realistic biophysical
models. Developments of the population density method involve analytically derived
moment closure approximations~\cite{Cai:2004em,Ly:2007fg}. Both Monte Carlo
simulations and population density methods give access only to stable asymptotic
states, which may form only after long-transient simulations. 

An alternative approach is offered by
\textit{equation-free}~\cite{Kevrekidis2003r,Kevrekidis:2009jo} and
\textit{heterogeneous multiscale} methods~\cite{Weinan2003aa,Weinan:2007tra}, which implement
multiple-scale simulations using an on-the-fly numerical closure
approximations. Equation-free methods, in particular, are of interest in
computational neuroscience as they accelerate macroscopic simulations and allow
the computation of unstable macroscopic states. In addition, with equation-free
methods, it is possible to perform coarse-grained bifurcation analysis using
standard numerical bifurcation techniques for time-steppers~\cite{Tuckerman:2000jn}.

The equation-free framework~\cite{Kevrekidis2003r,Kevrekidis:2009jo} assumes the
existence of a closed coarse model in terms of a few macroscopic state
variables. The model closure is enforced numerically, rather than analytically, using 
a \textit{coarse time-stepper}: a computational procedure which takes advantage
of knowledge of the microscopic dynamics to time-step an approximated
macroscopic evolution equation. A single coarse time step from time $t_0$ to time $t_1$ is
composed of three stages: (i) \textit{lifting}, that is, the creation of microscopic
initial conditions that are compatible with the macroscopic states at time $t_0$;
(ii) \textit{evolution}, the use of independent realisations of the microscopic model
over a time interval $[t_0, t_1]$; (iii) \textit{restriction}, that is, the
estimation of the macroscopic state at time $t_1$ using the realisations of the
microscopic model.

While equation-free methods have been employed in various contexts
(see~\cite{Kevrekidis:2009jo} and references therein) and in particular in
neuroscience
applications~\cite{Laing:2006bg,Laing:2007bna,Laing:2010fw,Spiliotis2011aeta,Spiliotis:2012bz,
Laing:2015gp}, there are still open questions, mainly related to how noise
propagates
through the coarse time stepper. A key aspect of every equation-free
implementation is the lifting step. The underlying lifting operator, which maps a
macroscopic state to a set of microscopic states, is generally non-unique, and
lifting choices have a considerable impact on the convergence properties of the
resulting numerical scheme~\cite{Avitabile2014aa}. Even though the choice of coarse
variables can be automatised using data-mining techniques, as shown in several papers
by Laing, Kevrekidis and co-workers~\cite{Laing:2006bg,Laing:2007bna,Laing:2010fw},
the lifting step is inherently problem dependent.

The present paper explores the possibility of using techniques from neural field
models to inform the coarse-grained bifurcation analysis of discrete neural
networks. A successful strategy in analysing neural fields is to replace the
models' sigmoidal firing rate functions with Heaviside
distributions~\cite{Bressloff2012o,Bressloff:2014cm}. Using this strategy, it is
possible to relate macroscopic observables, such as bump widths or wave speeds,
to biophysical parameters, such as firing rate thresholds. Under this
hypothesis, a macroscopic variable suggests itself, as the state of the system
can be constructed entirely via the loci of points in physical space where the
neural activity attains the firing-rate threshold value. In addition, there
exists a closed (albeit implicit) evolution equation for such
interfaces~\cite{Coombes2012aa}.

In this study, we show how the insight gained in the Heaviside limit may be
used to perform coarse-grained bifurcation analysis of neural networks, even in cases
where the network does not evolve according to an integro-differential equation. As
an illustrative example, we consider a spatially-extended neural
network in the form of a discrete time Markov chain with discrete ternary state space,
posed on a lattice.
The model is an existing cellular automaton proposed by Gong and
co-workers~\cite{Gong2012aa}, and it has been related to neuroscience in the context
of relevant spatio-temporal activity patterns that are observed in cortical tissue.
In spite of its simplicity, the model possesses sufficient complexity to support rich
dynamical behaviour akin to that produced by neural fields. In particular, it
explicitly includes refractoriness and is one of the simplest models capable of
generating propagating activity in the form of travelling waves. An important feature
of this model is that the microscopic transition probabilities depend on the local
properties of the tissue, as well as on the global synaptic profile across the network.
The latter has a convolution structure typical of neural field models, which we
exploit to use interface dynamics and define a suitable lifting strategy.

To connect our micro- and macroscopic variables, we take advantage of
interface approaches, which are typically applied to continuum networks. A notable
exception is offered by Chow and Coombes~\cite{Chow2006aa}, who consider a network
based upon the lighthouse model~\cite{Haken2002}. In a similar vein to our
approach, they show how analysis of the discrete network can be facilitated by
considering a continuum approximation and derive threshold equations to define bump
solutions. This analysis also highlights that perturbations to the microscopic
state, specifically the phase arrangement within the bump, can alter the dynamics
of the bump edges.

Chow and Coombes found that wandering bump solutions in the lighthouse model
arise for sufficiently fast synaptic processing. This is
congruent with our result that short refractory times in~\eqref{eq:microPhi}
elicit coherent bump states, since both refractory times and synaptic processing
timescales affect the average firing rate of the neuron.
However, bumps cease to exist in our model if the refractory times are too long,
whereas the lighthouse model supports stationary bumps for slow synapses, which
highlights the subtle differences between the roles of refractoriness and synaptic
processing in neural networks.
It should also be noted that the meandering observed, for instance, in
Figure~\ref{fig:exampleBump} is due to noise, and that all bumps will tend to
wander; on the other hand, the meandering described by Chow and Coombes arises from
the deterministic dynamics of the lighthouse model, and it is triggered by a
sufficently fast synaptic process. We also remark that, without
modification, the lighthouse model does not support travelling wave solutions, and
so we cannot make comparisons regarding these solutions.

We initially study the model in simplifying limits in which an analytical (or
semi-analytical) treatment is possible. In these cases, we construct bump and wave
solutions and compute their stability. This analysis follows the standard Amari
framework, but is here applied directly to the cellular automaton.
We then derive the corresponding lifting operators, which highlight a critical
importance of the microscopic structure of solutions: one of the main results of our
analysis is that, since macroscopic stationary and travelling bumps coexist and have
nearly identical macroscopic profiles, a standard lifting is unable to distinguish
between them, thereby preventing coarse numerical continuation. These
structures, however, possess different microstructures, which are captured
by our analysis and subsequently by our lifting operators. 
This allows us to compute separate solution branches, in which we vary several model
parameters, including those associated with the noise processes.

The manuscript is arranged as follows: In Section~\ref{sec:modelDescription} we
outline the model. In Section
\ref{sec:simulations}, we simulate the model and identify the macroscopic
profiles in which we are interested, together with the coarse variables that
describe them. In Section~\ref{sec:macroModel}, we define a deterministic
version of the full model and lay down the framework for analysing it. In
Sections~\ref{sec:MacroBump} and
\ref{sec:TWDet}, we respectively construct bump and wave solutions under
the deterministic approximation and compute the stability of these solutions.
In Section~\ref{sec:approxMu}, we define and construct travelling waves
relaxing the deterministic limit. In Sections~\ref{sec:bumpLift} and
\ref{sec:TWpLift}, we provide the lifting steps for use in the equation-free
algorithm for the bump and wave respectively. In Section~\ref{sec:continuation},
we briefly outline the continuation algorithm and in Section~\ref{sec:numerics},
we show the results of applying this continuation to our system. Finally, in
Section
\ref{sec:discussion}, we make some concluding remarks.

\section{Model description} \label{sec:modelDescription}
\subsection{State variables for continuum and discrete tissues} \label{sec:genModel}
In this section, we present a modification of a model originally proposed by Gong and
Robinson~\cite{Gong2012aa}.
We consider a one-dimensional
neural tissue 
$\XSet \subset \RSet$. 
At each discrete time step
$t \in \ZSet$, a neuron at position $x \in \XSet$ may be in one of three states: a
refractory state (henceforth denoted as $-1$), a quiescent state (0) or a spiking
state (1). Our state variable is thus
a function $u \colon \XSet \times \ZSet \to \USet$, where
$\USet=\set{-1,0,1}$.  
%
We pose the model on a continuum tissue $\SSet = \RSet/2L\ZSet$ or on a discrete
tissue featuring $N+1$ evenly spaced neurons,
\[
\SSet_N = \{x_i\}_{i=0}^N, \qquad x_i = -L + i2L /N \in [-L,L].
\]
We will often alternate between the discrete and the continuum setting, hence we will
use a unified notation for these cases. We use the symbol $\XSet$ to refer to either
$\SSet$ or $\SnSet$, depending on the context. Also, we use $u(\blank,t)$ to
indicate the state variable in both the discrete and the continuum case:
$u(\blank,t)$ will denote a step function defined on $\SSet$ in the continuum
case and a vector in $\USet^N$ with components $u(x_i,t)$ in the discrete case.
Similarly, we write $\int_\XSet u(x) \,\d x$ to indicate $\int_\SSet u(x) \, \d
x $ or $2L/N \sum_{j=0}^N
u(x_j)$.

\subsection{Model definition} \label{sec:microModel}
We use the term \textit{stochastic model} when the Markov chain model
described below is posed on $\SnSet$.
An example of a state supported by the stochastic model is given in
Figure~\ref{fig:schematic}(a).
\begin{figure}
  \centering
  \includegraphics{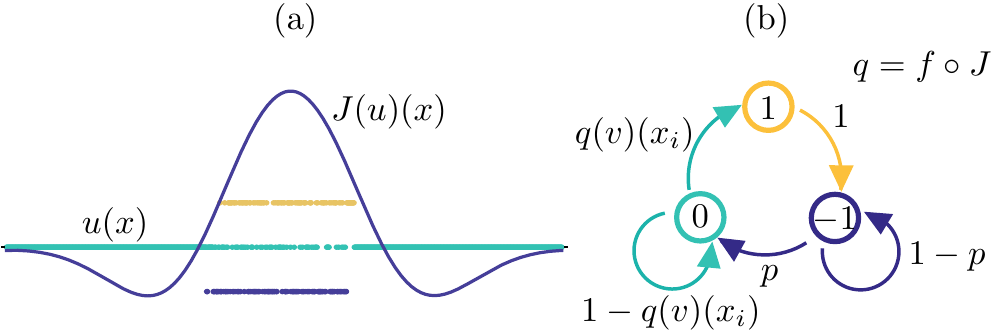}
  \caption{(a): Example of a state $u(x) \in \USet^N$ and corresponding
  synaptic profile $J(u)(x) \in \RSet^N$ in a stochastic network of $1024$ neurons. (b):
  Schematic of the transition kernel for the network (see also
  Equations~\eqref{eq:probRefr}--\eqref{eq:probSpike}). The conditional
  probability of the \textit{local} variable $u(x_i,t+1)$ depends on the
  \textit{global} state of the network at time $t$, via the function $q = f
  \circ J$, as seen in~\eqref{eq:probSpike}.}
  \label{fig:schematic}
\end{figure}

In the model, neurons are coupled via a translation-invariant synaptic kernel, that is, we
assume the connection strength between two neurons to be
dependent on their mutual Euclidean distance. In particular, we prescribe that short
range connections are excitatory, whilst long-range connections are inhibitory. To
model this
coupling, we use a standard Mexican hat function, 
\begin{equation} \label{eq:kernel}
   w \colon \XSet \to \RSet, \qquad
   x \mapsto A_1 \sqrt{B_1/L} \exp(-4 B_1 x^2) - A_2 \sqrt{B_2/L} \exp(-4 B_2
   x^2),
\end{equation}
and denote by $W$ its periodic extension.

In order to describe the dynamics of the model, it is useful to partition the
tissue $\XSet$ into the $3$ pullback sets
\begin{equation} \label{eq:partitions}
  X^u_k(t) = \Set{ x \in \XSet \colon u(x,t) = k }, \quad k \in \USet, \quad t
  \in \ZSet,
\end{equation}
so that we can write, for instance, $X^u_1(t)$ to denote the set
of neurons that are firing at time $t$ (and similarly for $X^u_{-1}$ and $X^u_0$).
Where it is unambiguous, we shall simply write $X_k$ or $X_k(t)$ in place of $X^u_k(t)$.

The synaptic input to a cell at position $x_i$ is given by a weighted sum of
inputs from all firing cells. Using the synaptic kernel~\eqref{eq:kernel} and
the partition~\eqref{eq:partitions}, the synaptic input 
is then modelled as
\begin{equation} \label{eq:J}
    J \colon \XSet \times \ZSet \to \RSet, 
    \qquad
    (x,t) \mapsto
	   \kappa \int_{\XSet} W(x - y) \ind_{X_1(t)}(y) \, \d y
         = \kappa \int_{X_1(t)} W(x - y) \, dy,
\end{equation}
where $\kappa \in \RSet_+$ is the synaptic gain, which is common for all
neurons
and $\ind_X$ is the indicator function of a set $X$. 

\begin{rem}[Synaptic input as mesoscopic variable] \label{rem:uDependence}
  Since $X_1$ depends on the microscopic state variable $u$, so does the
  synaptic input~\eqref{eq:J}. 
  Where necessary, we will write $J(u)(x,t)$ to highlight the dependence on $u$. We
  refer the reader to
  Figure~\ref{fig:schematic} for a concrete example of synaptic profile. 
\end{rem}

The firing probability associated to a quiescent neuron is linked to the
synaptic input via the firing rate function
\begin{equation} \label{eq:firingRate}
  f \colon \RSet \to \RSet,
  \qquad
  I \mapsto \frac{1}{1+ \exp[-\beta(I - h)]},
\end{equation}
whose steepness and threshold are denoted by the positive real numbers $\beta$
and $h$, respectively.
We are now ready to describe the evolution of the stochastic model, which 
is a discrete-time Markov process with finite state space $\USet^N$ and
transition probabilities specified as follows: for each $x_i \in \SnSet$ and $t \in
\ZSet$
\begin{align}
  & \prob \big[ u(x_i,t+1) = -1 \big| u(x,t) = v(x) \big] 
  =
  \begin{cases}
    1-p & \text{if $v(x_i)=-1$,} \\
      1 & \text{if $v(x_i)= 1$,} \\
      0 & \text{otherwise,}
  \end{cases}
  \label{eq:probRefr}
  \\
  & \prob \big[ u(x_i,t+1) = 0 \big| u(x,t) = v(x) \big] 
  =
  \begin{cases}
      p            & \text{if $v(x_i)=-1$,} \\
    1-f(J(v))(x_i) & \text{if $v(x_i)= 0$,} \\
      0 & \text{otherwise,}
  \end{cases}
  \label{eq:probQuies}
  \\
  & \prob \big[ u(x_i,t+1) = 1 \big| u(x,t) = v(x) \big] 
  =
  \begin{cases}
    f(J(v))(x_i) & \text{if $v(x_i)= 0$,} \\
    0 & \text{otherwise,}
  \end{cases}
  \label{eq:probSpike}
\end{align}
where $p \in (0,1]$. We give a schematic representation of the transitions of
each neuron in the network
in Figure~\ref{fig:schematic}(b). We remark that conditional probability of the
\textit{local} variable $u(x_i,t+1)$ depends on the \textit{global} state of the
network at time $t$, via the function $f \circ J$.

The model described by~\eqref{eq:kernel}--\eqref{eq:probSpike}, complemented by
initial conditions, defines a stochastic evolution map that we will formally denote as
\begin{equation}\label{eq:microPhi}
  u(x,t+1) = \varphi(u(x,t); \gamma),
\end{equation}
where $\gamma=(\kappa,\beta,h,p,A_1,A_2,B_1,B_2)$ is a vector of control parameters.  

\begin{rem}[Microscopic, mesoscopic and macroscopic descriptions] \label{rem:microMesoMacro}
  We will henceforth use the terms ``microscopic'', ``mesoscopic'' and
  ``macroscopic'' to refer to different state variables or model descriptions.
  Examples of these three state variables appear toghether in
  Figures~\ref{fig:exampleBump}--\ref{fig:exampleWave} in
  Section~\ref{sec:simulations}, and we introduce them briefly here:
  \begin{description}
    \item[Microscopic level.] Model~\eqref{eq:microPhi} will be
      referred to as microscopic model and its solutions at a fixed time $t$ as
      microscopic states. We will use these terms also when $p=1$ and $\beta \to
      \infty$, that is, when the evolution equation~\eqref{eq:microPhi} is
      deterministic.
    \item[Mesoscopic level.] In Remark~\ref{rem:uDependence}, we associated to
      each microscopic state $u$ a corresponding synaptic profile $J$, which is
      smooth, even when the tissue is discrete. We will not seek for an evolution
      equation for the variable $J$, as the corresponding dynamical system would not
      reprent a reduction of the microscopic one. However, we will use $J$ to bridge
      between the microscopic and macroscopic model descriptions; we therefore refer
      to $J$ as a mesoscopic variable (or mesoscopic state).
    \item[Macroscopic level.] Much of the present paper aims to show
      that, for the model under consideration, there exists a high-level model
      description, in the spirit of interfacial dynamics for neural
      fields~\cite{Bressloff2012o,Coombes2012aa,Bressloff:2014cm}. The state
      variables for this level are points on the tissue where $J(u)(x,t)$ attains the
      firing rate threshold $h$. We will denote these threshold crossings as
      $\{\xi_i(t)\}$ and we will discuss (reduced) evolution equations in terms of
      $\xi_i(t)$. The variables $\{ \xi(t) \}$ are therefore referred to as
      macroscopic variables and the corresponding evolution equations as macroscopic
      model. 
  \end{description}
\end{rem}

\section{Microscopic states observed via direct simulation} \label{sec:simulations}
\begin{table}
  \centering
  \begin{tabular}{l c c c  c  c c  c c c c}
    \toprule
    Experiment & {$\kappa$} & {$\beta$} & {$h$} & {$p$} & {$A_1$} & {$A_2$} & {$B_1$} & {$B_2$} & {$N$} & {$L$} \\
    \midrule
          Bump            &   30   &  5         &  0.9  & 0.7  &  5.25  &  5   & 0.2 & 0.3 & 1024 & $\pi$ \\
          Multiple bump   &   60   &  5         &  0.9  & 0.7  &  5.25  &  5   & 0.2 & 0.3 & 2058 & $2\pi$ \\
          Travelling wave &   30   &  $\infty$  &  1.0  & 0.4  &  5.25  &  5   & 0.2 & 0.3 & 1024 & $\pi$ \\
    \bottomrule
  \end{tabular}
  \caption{Parameter values for which the stochastic model supports a bump
  (Figure~\ref{fig:exampleBump}), a multiple-bump solution
  (Figure~\ref{fig:exampleMultiBump}) and a travelling wave
  (Figure~\ref{fig:exampleWave}). The value $\infty$ for the parameter $\beta$
  indicates that a Heaviside firing rate has been used in place of the sigmoidal
  function~\eqref{eq:firingRate}.}
  \label{tab:params}
\end{table}

In this section, we introduce a few coherent states supported by the stochastic model.
The main aim of the section is to show examples of bumps, multiple bumps
and travelling waves, whose existence and stability properties will be studied in
the following sections. In addition, we give a first characterisation of the
macroscopic variables of interest and link them to the microscopic structure observed
numerically.

\subsection{Bumps} \label{subsec:bumps}
In a suitable region of parameter space, the microscopic model supports bump
solutions~\cite{Qi2015aa} in which the microscopic variable $u(x,t)$ is nonzero only
in a localised region of the tissue. In this \emph{active} region, neurons attain
all values in $\USet$. In Figure~\ref{fig:exampleBump}, we
show a time simulation of the microscopic model with $N=1024$ neurons. At each time
$t$, neurons are in the refractory (blue), quiescent (green) or spiking (yellow)
state. We prescribe the initial condition by setting $u(x_i,0)=0$ outside of a
localised region, in which $u(x_i,0)$ are sampled randomly from $\USet$. After a short
transient, a stochastic \textit{microscopic} bump is formed. As
expected due to the stochastic nature of the system \cite{Kilpatrick2013}, the
active region wanders while remaining localised. A space-time section of the
active region reveals a characteristic random microstructure (see
Figure~\ref{fig:exampleBump}(a)).
\begin{figure}
  \centering
  \includegraphics{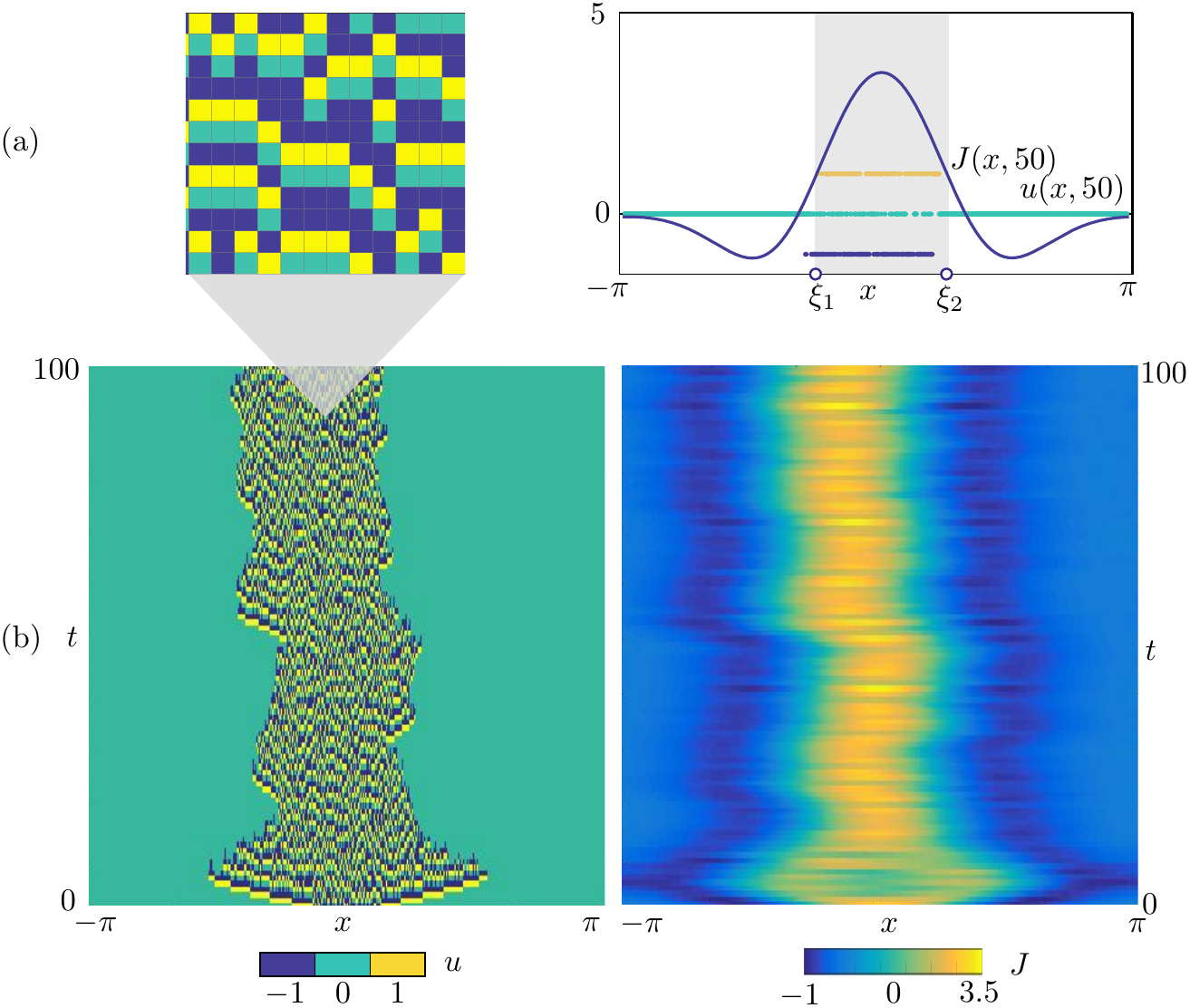}
  \caption{Bump obtained via time simulation of the stochastic model for $(x,t) \in
  [-\pi,\pi] \times [0,100]$. (a): The microscopic state $u(x,t)$
  (left) attains the discrete values $-1$ (blue), $0$ (green) and $1$ (yellow). The
  corresponding synaptic profile $J(x,t)$ is a continuous function. A comparison
  between $J(x,50)$ and $u(x,50)$ is reported on the right panel, where we also mark
  the interval $[\xi_1, \xi_2]$ where $J$ is above the firing threshold $h$. (b):
  Space-time plots of $u$ and $J$. Parameters as in Table~\ref{tab:params}.}
  \label{fig:exampleBump}
\end{figure}
By plotting $J(x,t)$, we see that the active region is well approximated by the
portion of the tissue $X_\geq = [\xi_1, \xi_2]$ where $J$ lies above the threshold
$h$. A quantitative comparison between $J(x,50)$ and $u(x,50)$ is made in
Figure~\ref{fig:exampleBump}(a). We interpret $J$ as a \textit{mesoscopic} variable
associated with the bump, and $\xi_1$ and $\xi_2$ as corresponding
\textit{macroscopic} variables (see also Remark~\ref{rem:microMesoMacro}).

\begin{figure}
  \centering
  \includegraphics{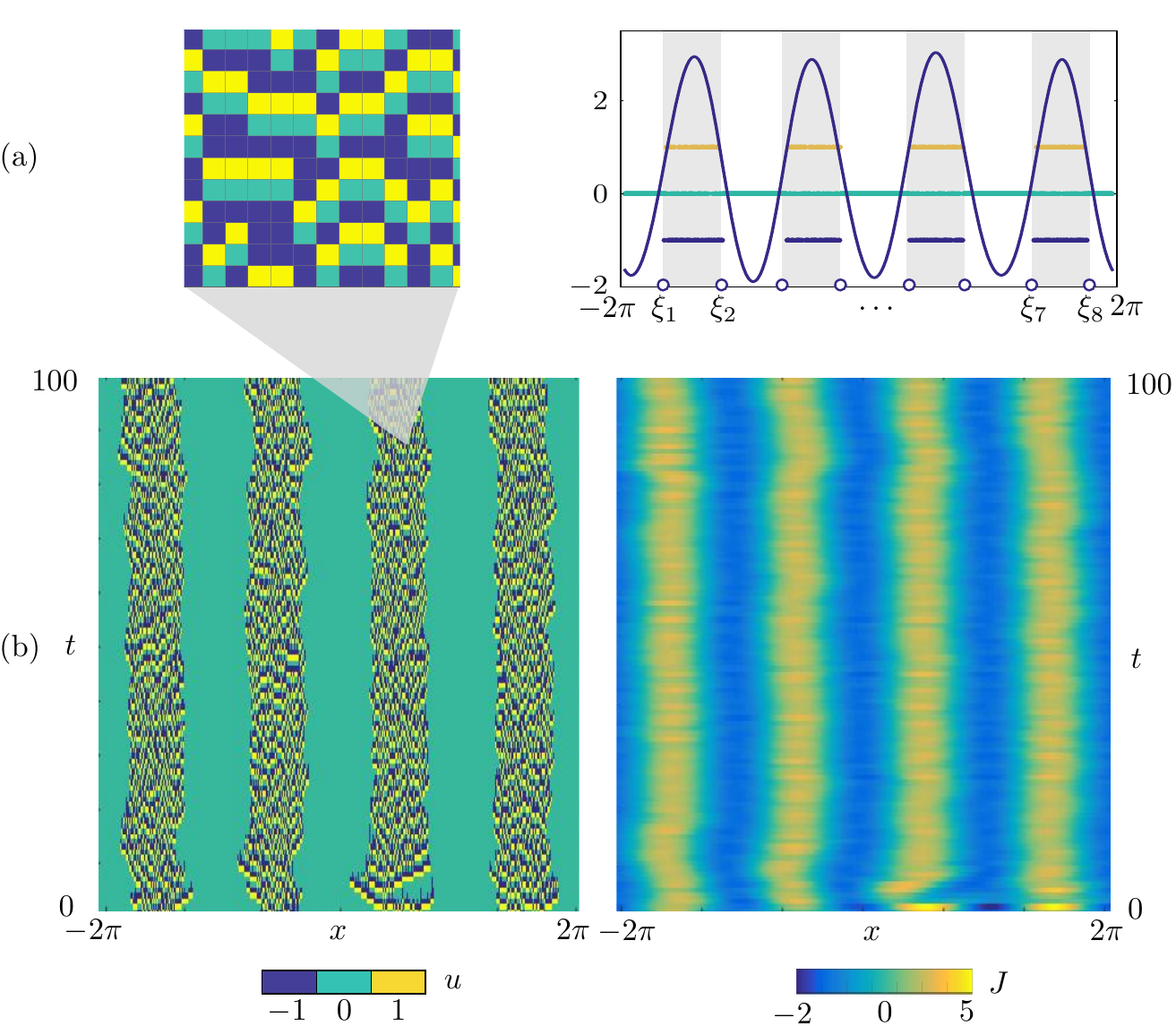}
  \caption{Multiple bump solution obtained via time simulation of the stochastic model
  for $(x,t) \in [-2\pi,2\pi] \times [0,100]$. (a): The microscopic state $u(x,t)$ in
  the active region (left) is similar to the one found for the single bump (see
  Figure~\ref{fig:exampleBump}(a)). A comparison between $J(x,50)$ and $u(x,50)$ is
  reported on the right panel, where we also mark the intervals $[\xi_1, \xi_2],
  \ldots, [\xi_7,\xi_8]$ where $J$ is above the firing threshold $h$. (b): Space-time
  plots of $u$ and $J$. Parameters are as in Table~\ref{tab:params}.}
  \label{fig:exampleMultiBump}
\end{figure}

\subsection{Multiple-bumps solutions} \label{subsec:multiBumps}
Solutions with multiple bumps are also observed by direct simulation,
as shown in Figure~\ref{fig:exampleMultiBump}. The microstructure of these
patterns
resembles the one found in the single bump case (see
Figure~\ref{fig:exampleMultiBump}(a)). At the mesoscopic level, the set for which $J$
lies above the threshold $h$ is now a union of disjoint intervals $[\xi_1,\xi_2],
\dots, [\xi_7,\xi_8]$. The number of bumps of the pattern depends on the width of the
tissue; the experiment of Figure~\ref{fig:exampleMultiBump} is carried out on a
domain twice as large as that of Figure~\ref{fig:exampleBump}. The examples of
bump and multiple-bump solutions reported in these figures are obtained for different
values of the main control parameter $\kappa$ (see Table~\ref{tab:params}), however,
these states coexist in a suitable region of parameter space, as will be shown
below.

\begin{figure}
  \centering
  \includegraphics{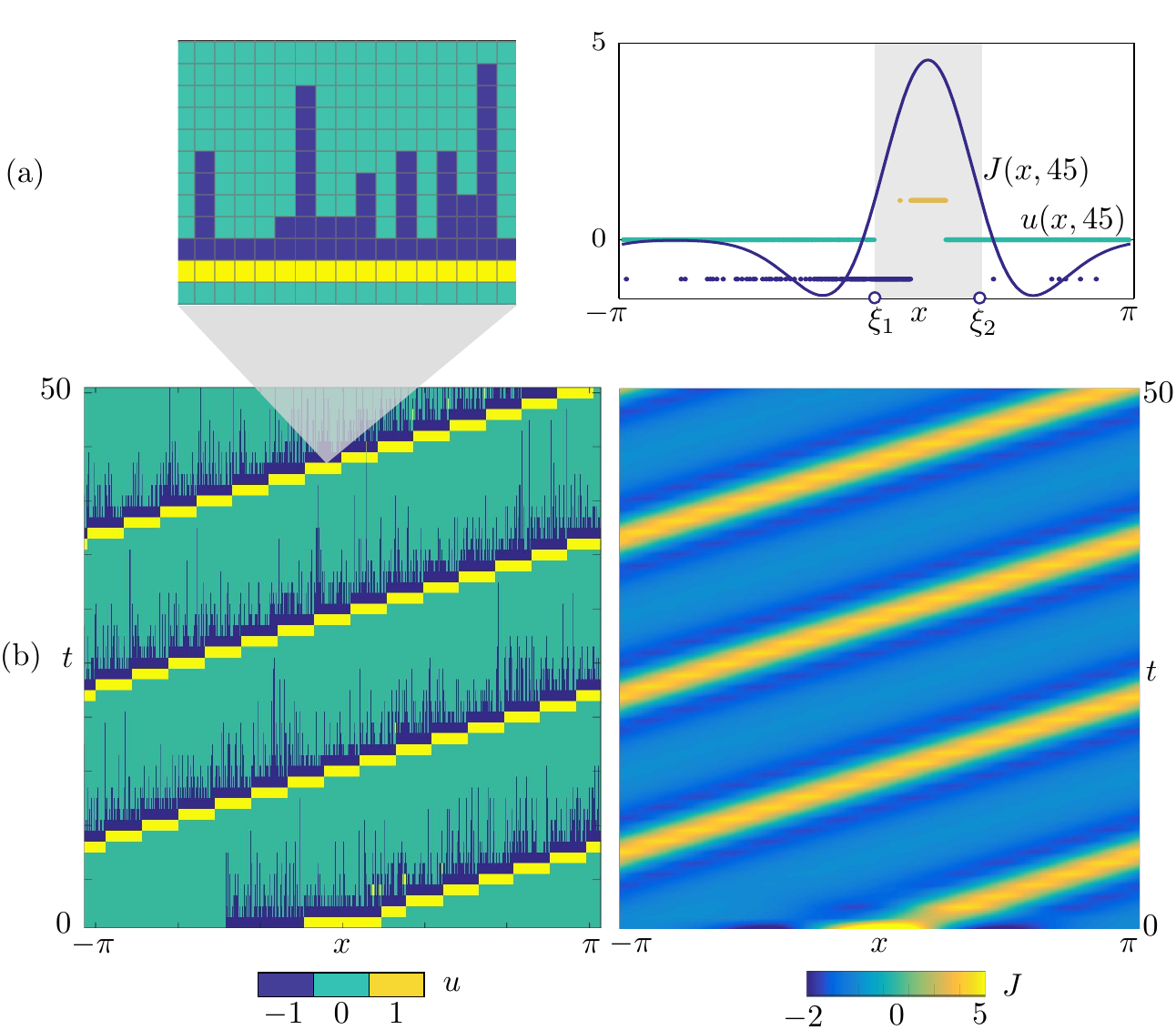}
  \caption{Travelling wave obtained via time simulation of the stochastic model for
  $(x,t) \in [-\pi,\pi] \times [0,50]$. (a): The microscopic state $u(x,t)$
  (left) has a characteristic microstructure, which is also visible on the right
  panel, where we compare $J(x,45)$ and $u(x,45)$. As in the other cases, we mark the
  interval $[\xi_1, \xi_2]$ where $J$ is above the firing threshold $h$. (b):
  Space-time plots of $u$ and $J$. Parameters are as in Table~\ref{tab:params}.}
  \label{fig:exampleWave}
\end{figure}

\subsection{Travelling waves} \label{subsec:travellingWaves}
Further simulation shows that the model also supports coherent states in the
form of stochastic travelling waves. In two spatial dimensions, the system is known to
support travelling spots~\cite{Gong2012aa,Qi2015aa}. In Figure~\ref{fig:exampleWave},
we show a time simulation of the stochastic model with initial condition
\[
u(x,0) = \sum_{k \in \USet} k \ind_{X_k}(x) 
\quad \text{with partition} \quad
  \begin{aligned}
    X_{-1} & = [-1.5,-0.5), \\
    X_{0}  & = [-\pi,-1.5)\cup[0.5,\pi), \\
    X_{1}  & = [-0.5,0.5).
\end{aligned}
\]
In passing, we note that the state of the network at each discrete time $t$ is defined
entirely by the partition $\{X_k\}$ of the tissue; we shall often use this
characterisation in the reminder of the paper.
\begin{figure}
  \centering
  \includegraphics{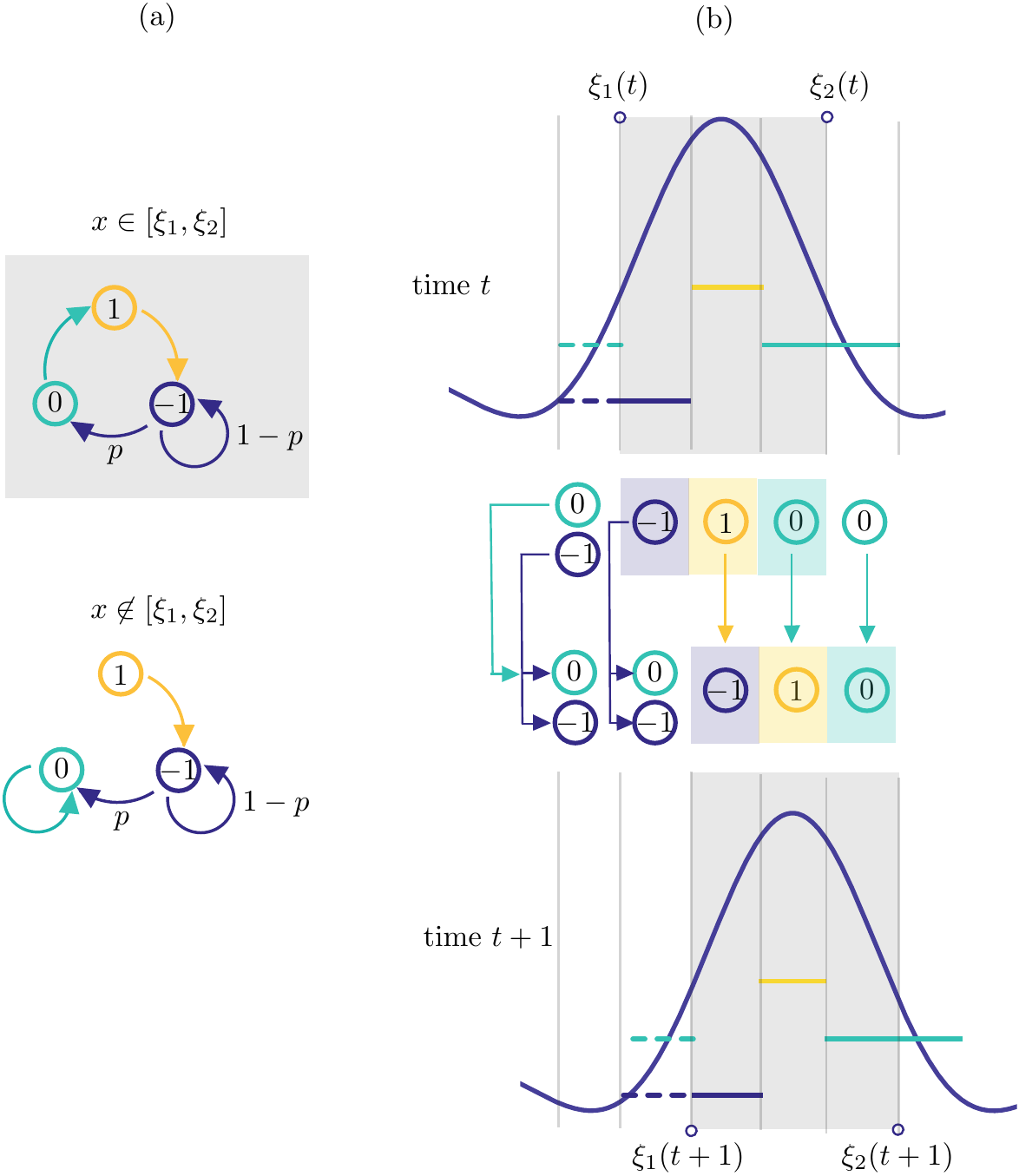}
  \caption{Schematic of the advection mechanism for the travelling wave state. Shaded
  areas pertain to the active region $[\xi_1(t),\xi_2(t)]$, non-shaded areas
  to the inactive region $\XSet \setminus [\xi_1(t),\xi_2(t)]$. (a): In the
  active (inactive) region, $q_i = f(J(u))(x_i) \approx 1$ ($q_i \approx 0$),
  hence the transition kernel~\eqref{eq:probRefr}--\eqref{eq:probSpike} can be simplified
  as shown. (b): At time $t$ the travelling wave has a profile similar to the one in
  Figure~\ref{fig:exampleWave}, which we represent in 
  the proximity of the active region. We depict 5 intervals of equal width, $3$
  of which form a partition of $[\xi_1(t),\xi_2(t)]$. Each interval is mapped to
  another interval at time $t+1$, following the transition rules sketched in
  (a). In one discrete step, the wave progresses with positive speed: so that
  $J(x,t+1)$ is a translation of $J(x,t)$.}
  \label{fig:advectionMechanism}
\end{figure}

In the direct simulation of Figure~\ref{fig:exampleWave}, the active region moves to
the right and, after just $4$ iterations, a travelling
wave emerges. The microscopic variable, $u(x,t)$, displays stochastic fluctuations
which disappear at the level of the mesoscopic variable, $J(x,t)$, giving rise to a
seemingly deterministic travelling wave. A closer inspection
(Figure~\ref{fig:exampleWave}(a)) reveals that the state can still be described
in terms of the active region $[\xi_1, \xi_2]$ where $J$ is above $h$. However, the
travelling wave has a different microstructure with respect to the bump. Proceeding
from right to left, we observe: 
\begin{enumerate}
  \item A region of the tissue ahead of the wave, $x \in (\xi_2,\pi)$, where the
    neurons are in the quiescent state $0$ with high probability.
  \item An active region $x \in [\xi_1,\xi_2]$, split in three subintervals, each of
    approximate width $(\xi_2 - \xi_1)/3$, where $u$ attains with high probability
    the values $0$, $1$ and $-1$ respectively. 
  \item A region at the back of the wave, $ x \in [-\pi,\xi_1)$, where neurons are
    either quiescent or refractory. We note that $u=0$ with high probability as $x \to
    -\pi$ whereas, as $x \to \xi_1$, neurons are increasingly likely to be
    refractory, with $u=-1$.
\end{enumerate}

A further observation of the space-time plot of $u$ in
Figure~\ref{fig:exampleWave}(b) reveals a remarkably simple advection mechanism of
the travelling wave, which can be fully understood in terms of the transition
kernel of Figure~\ref{fig:schematic}(b) upon noticing that, for sufficiently large
$\beta$, $q_i = f(J(u))(x_i) \approx
0$ everywhere except in the active region, where $q_i \approx 1$. In
Figure~\ref{fig:advectionMechanism}, we show how the transition kernel simplifies
inside and outside the active region and provide a schematic of the advection
mechanism. For an idealised travelling wave profile at time $t$, we depict 3
subintervals partitioning the active region (shaded), together with 2 adjacent
intervals outside the active region. Each interval is then mapped to another
interval, following the simplified transition rules sketched in
Figure~\ref{fig:advectionMechanism}(a): 
\begin{enumerate}
  \item At the front of the wave, to the right of $\xi_2(t)$, neurons in the
    quiescent state $0$ remain at $0$ (rules for $x \not \in [\xi_1,\xi_2]$).
  \item Inside the active region, to the left of $\xi_2(t)$, we follow
    the rules for $x \in [\xi_1,\xi_2]$ in a clockwise manner: neurons in
    the quiescent state $0$ spike, hence their state variable becomes $1$;
    similarly, spiking neurons become refractory.
    Of the neurons in the refractory state, those being the
    ones nearest $\xi_1(t)$, a proportion $p$ become
    quiescent, while the remaining ones remain refractory.
  \item At the back of the wave, to the left of $\xi_1(t)$, the interval
    contains a mixture of neurons in states $0$ and $-1$. The former
    remain at $0$ whilst, of the latter, a proportion $p$ transition
    into state $0$, with the rest remaining at $-1$ (rules for $x
    \not \in [\xi_1,\xi_2]$). From this argument, we see that the proportion of
    refractory neurons in the back of the wave must decrease as $\xi \to -\pi$.
\end{enumerate}
The resulting mesoscopic variable $J(x,t+1)$ is a spatial translation by $( \xi_2(t) -
\xi_1(t) )/3$ of $J(x,t)$. We remark that the approximate transition rules
of Figure~\ref{fig:advectionMechanism}(a) are valid also in the case of a bump,
albeit the corresponding microstructure does not allow the advection mechanism
described above.

\begin{figure}
  \centering
  \includegraphics{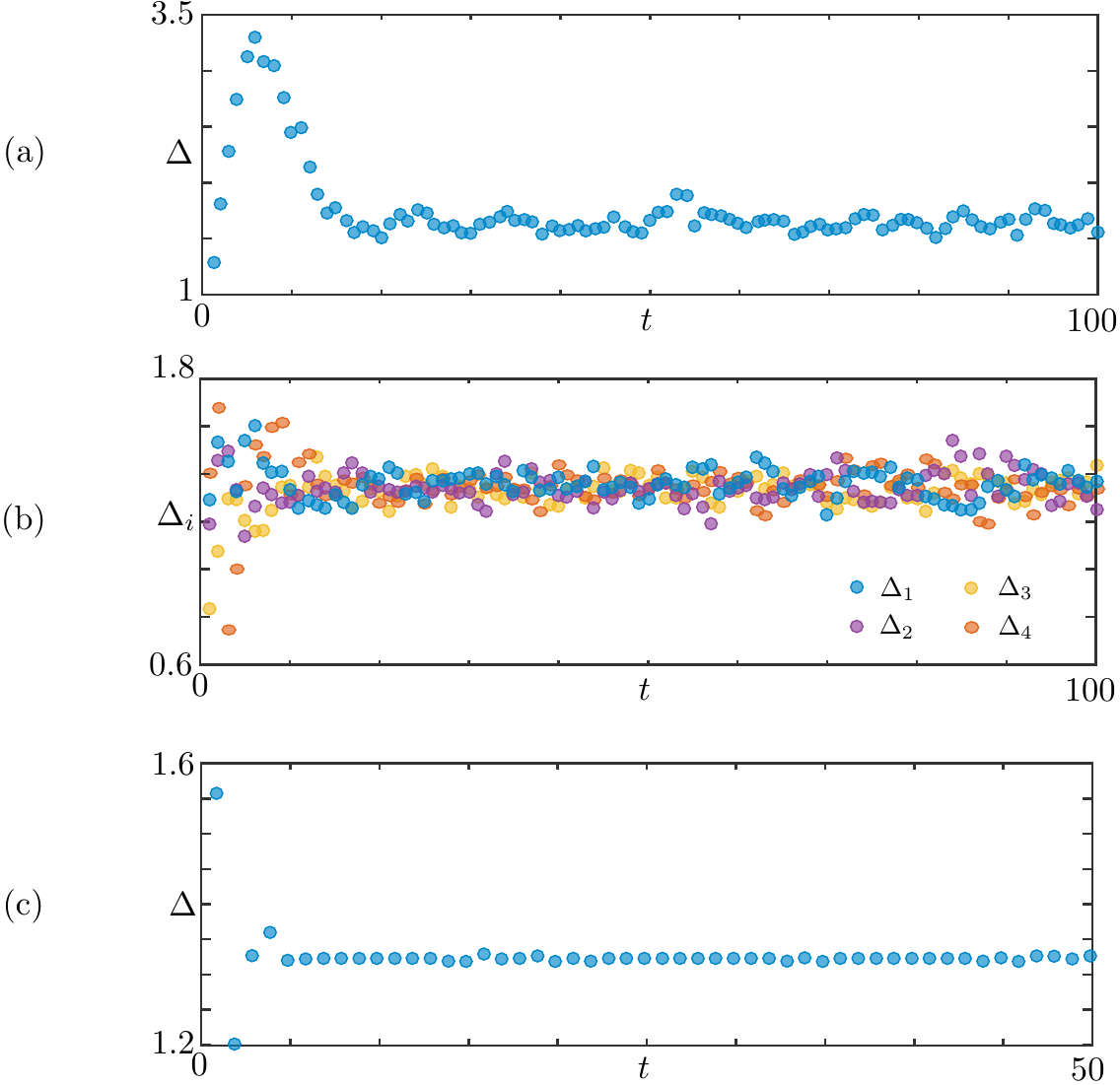}
  \caption{Width of the active regions $\Delta_i = \xi_{2i} - \xi_{2i-1}$ for the
  patterns in Figures~\ref{fig:exampleBump}--\ref{fig:exampleWave}. (a): Bump, for
  which $i=1$. (b): Multiple Bump, $i=1,\ldots,4$. (c): Travelling wave, $i=1$. In
  all cases, the patterns reach a coarse equilibrium state after a short transient.}
  \label{fig:slaving}
\end{figure}

\subsection{Macroscopic variables}\label{sec:macroscopicDescriptions}
The computations of the previous sections suggest that, beyond the mesoscopic
variable, $J(x)$, coarser macroscopic variables are available to describe the observed
patterns. In analogy with what is typically found in neural fields with Heaviside
firing rate~\cite{Amari1977aa,Bressloff:2014cm,Coombes2004aa},
the scalars $\{\xi_i\}$
defining the active region $X_\geq = \cup_i [\xi_{2i-1}, \xi_{2i}]$, where $J$ is
above $h$, seem plausible macroscopic variables. This is evidenced not only by
Figures~\ref{fig:exampleBump}--\ref{fig:exampleWave}, but also from the schematic in
Figure~\ref{fig:advectionMechanism}(b), where the interval $[\xi_1(t),\xi_2(t)]$ is
mapped to a new interval $[\xi_1(t+1),\xi_2(t+1)]$ of the same width. To explore this
further, we extract the widths $\Delta_i(t)$ of each sub-interval
$[\xi_{2i}(t),\xi_{2i-1}(t)]$ 
from the data in Figure~\ref{fig:exampleBump}--\ref{fig:exampleWave}, and plot the
widths as a function of $t$. In all cases, we observe a brief transient, after which
$\Delta_i(t)$ relaxes towards a coarse equilibrium, though fluctuations seem larger
for the bump and multiple bump when compared with those for the wave. In the multiple
bump case, we also notice that all intervals have approximately the same asymptotic
width. 

\section{Deterministic model} \label{sec:macroModel}
We now introduce a deterministic version of the stochastic model considered in
Section~\ref{sec:microModel}, which is suitable for carrying out analytical
calculations. We make the following assumptions:
\begin{enumerate}
  \item \textit{Continuum neural tissue.} We consider the limit of infinitely
    many neurons and pose the model on $\SSet$.
  \item \textit{Deterministic transitions.} We assume $p=1$, which
    implies a deterministic transition from refractory states to quiescent ones (see
    Equation~\eqref{eq:probRefr}), and $\beta \to \infty$, which induces a
    Heaviside firing rate $f(I) = \Theta(I-h)$ and hence a deterministic transition
    from quiescent states to spiking ones given sufficiently high input (see Equations~\eqref{eq:firingRate},
    \eqref{eq:probQuies}).
\end{enumerate}

In addition to the pullback sets $X_{-1}$, $X_0$, and $X_1$ defined
in~\eqref{eq:partitions}, we will partition the tissue into
\textit{active} and \textit{inactive} regions
\begin{equation} \label{eq:actSets}
  X_\geq(t) = \Set{ x \in \XSet \colon J(x,t) \geq h }, 
  \qquad 
  X_<(t) = \XSet \setminus X_\geq(t).
\end{equation}

In the deterministic model, the
transitions~\eqref{eq:probRefr}--\eqref{eq:probSpike} are then replaced by the
following rule
\begin{equation}\label{eq:detModel}
u(x,t+1) = 
\begin{cases}
  -1 & \textrm{if $x \in X_1(t)$}, \\
  0 & \textrm{if $x \in X_{-1}(t) \cup \big( X_0(t) \cap X_<(t) \big) $}, \\
  1 & \textrm{if $x \in X_0(t) \cap X_\geq (t) $}.
\end{cases}
\end{equation}
We stress that the right-hand side of the equation above depends on $u(x,t)$, since
the partitions $\{ X_{-1}, X_0, X_1 \}$ and $\{ X_< , X_\geq \}$ do so 
(see Remark~\ref{rem:uDependence}). 

As we shall see, it is sometimes useful to refer to the induced mapping of
the pullback sets
\begin{equation} \label{eq:pullbackModel}
  \begin{aligned}
    X_{-1}(t+1) & = X_1(t) \\
    X_0(t+1)    & = X_{-1}(t) \cup \big( X_0(t) \cap X_<(t) \big) \\
    X_1(t+1)    & = X_0(t) \cap X_\geq (t) 
  \end{aligned}
  \;.
\end{equation}
Henceforth, we will use the term \textit{deterministic model} and formally write
\begin{equation}\label{eq:detPhi}
  u(x,t+1) = \Phidet(u(x,t); \gamma).
\end{equation}
for 
\eqref{eq:detModel},
where the partition $\{ X_k \}_{k \in \USet}$ is defined
by~\eqref{eq:partitions} and the active and inactive sets $X_\geq$, $X_<$
by~\eqref{eq:actSets}.

\section{Macroscopic bump solution of the deterministic model}\label{sec:MacroBump}
\begin{figure}
  \centering
  \includegraphics{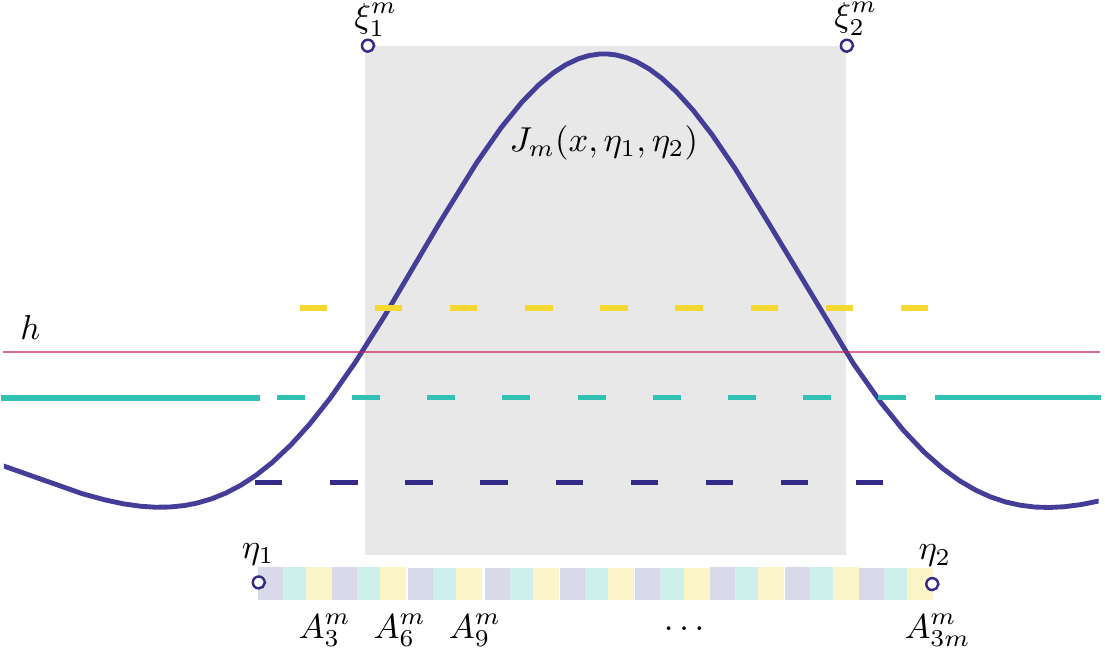}
  \caption{Schematic of the analytical construction of a bump. A microscopic state
  whose partition comprises $3m+2$ strips is considered. The microscopic
  state, which is not an equilibrium of the deterministic system, has a
  characteristic width $\eta_2 - \eta_1$, which differs from the width $\xi^m_2
  -\xi^m_1$ of the mesoscopic bump $J_m$. If we let $m \to \infty$ while keeping
  $\eta_2 - \eta_1$ constant, then $J_m$ tends towards a mesoscopic bump
  $J_\mathrm{b}$ and $\xi^m_i \to \eta_i$ (see
  Proposition~\ref{prop:macroBump}).} \label{fig:analyticBump}
\end{figure}

We now proceed to construct a bump solution of the deterministic model presented in
Section~\ref{sec:macroModel}. In order to do so, we consider a microscopic state
with a regular structure, resulting in a partition, $\{X_k^m\}_k$, with $3m+2$
strips (see Figure~\ref{fig:analyticBump}) and then study the limit $m\to \infty$.

\subsection{Bump construction}\label{subsec:macroBump}
Starting from two points $\eta_1, \eta_2 \in \SSet$, with $\eta_1< \eta_2$, we
construct $3m$ intervals as follows
\begin{equation}\label{eq:Am}
  A^m_i = \bigg[ \eta_1 + \frac{i-1}{3m} (\eta_2-\eta_1) , \eta_1 
  + \frac{i}{3m} (\eta_2 - \eta_1) \bigg), 
  \qquad i =1,\ldots,3m, \quad m \in \NSet.
\end{equation}
We then consider states $u_m(x) =  \sum_{k \in \USet} k \ind_{X^m_{k}}(x)$, with
partitions given by
\begin{equation}\label{eq:Xm}
  X^m_{-1} = \bigcup_{j = 1}^{m} A^m_{3j-2}, \quad
  X^m_0 = [-L,\eta_1) \cup [\eta_2,L) \bigcup_{j = 1}^{m} A^m_{3j-1}, \quad
  X^m_1 = \bigcup_{j = 1}^{m} A^m_{3j}, 
\end{equation}
and activity set $X_\geq = [\xi^m_1, \xi^m_2]$. We note that, in addition to the
$3m$ strips that form the active region of the bump, we also need two additional
strips in the inactive region to form a partition of $\SSet$.
In general, $\{\xi^m_i\}_i \neq \{\eta_i\}_i$, as
illustrated in Figure~\ref{fig:analyticBump}. Applying~\eqref{eq:detModel}
or~\eqref{eq:pullbackModel}, we find $\Phi_\textrm{d}(u_m) \neq u_m$, hence
$u_m$ are not equilibria of the deterministic model. However, these states help us
defining a macroscopic bump as a fixed point of a suitably defined map using the
associated mesoscopic synaptic profile
\begin{equation}\label{eq:Jm}
  J_m(x,\eta_1,\eta_2) = \kappa \int_{X_1^m(\eta_1,\eta_2)} W(x-y)\, \d y,
\end{equation}
where we have highlighted the dependence of $X_1^m$ on $\eta_1,\eta_2$. The
proposition below shows that there is a well defined limit, $J_\textrm{b}$, of the
mesoscopic profile as $m \to \infty$. We also have that $\xi^m_i \to \eta_i$ as
$m\to\infty$ and that the threshold crossings of the activity set are roots
of a simple nonlinear function.
\begin{prop}[Bump construction] \label{prop:macroBump}
  Let $W$ be the periodic extension of the synaptic kernel~\eqref{eq:kernel} and
  let $h, \kappa \in \RSet_+$. Further, let $\{ A_i^m \}_{i=1}^{3m}$, $X^m_1$
  and $J_m$ be defined as in~\eqref{eq:Am}, \eqref{eq:Xm} and \eqref{eq:Jm},
  respectively, and let $J_\mathrm{b} \colon \SSet^3 \to \RSet$ be defined as
  \[
  J_\mathrm{b}(x,\eta_1,\eta_2) = \frac{\kappa}{3} \int_{\eta_1}^{\eta_2} W(x-y)\, dy.
  \]
  The following results hold
  \begin{enumerate}
    \item $J_m(x,\eta_1,\eta_2) \to J_\mathrm{b}(x, \eta_1, \eta_2)$ as $m \to
      \infty$ uniformly
      in the variable $x$ for all $\eta_1, \eta_2 \in \SSet$ with
      $\eta_1 < \eta_2$,
    \item If there exists $\Delta \in (0,L)$ such that
      $3h = \kappa \int_0^\Delta W(y)\, dy$, then
      \[
      J_\mathrm{b}(0,0,\Delta) = J_\mathrm{b}(\Delta,0,\Delta) = h.
      \]
  \end{enumerate}
\end{prop}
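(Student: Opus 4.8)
The plan is to prove the two assertions separately, with the first (uniform convergence $J_m \to J_{\mathrm b}$) being the substantive part and the second (the threshold identities) following by a short symmetry argument.

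For part 1, the idea is that $J_m(x,\eta_1,\eta_2)$ is, up to the factor $\kappa$, the integral of $W(x-\cdot)$ over $X_1^m = \bigcup_{j=1}^m A_{3j}^m$, which is a ``every third strip'' subset of $[\eta_1,\eta_2]$ with total measure $(\eta_2-\eta_1)/3$. One expects this to behave, as $m\to\infty$, like integrating $\tfrac13 W(x-\cdot)$ over all of $[\eta_1,\eta_2]$, which is exactly $J_{\mathrm b}$. I would make this precise by writing
\[
  J_m(x,\eta_1,\eta_2) - J_{\mathrm b}(x,\eta_1,\eta_2)
  = \kappa \int_{\eta_1}^{\eta_2} \Big( \ind_{X_1^m}(y) - \tfrac13 \Big) W(x-y)\, \d y,
\]
and then estimating the right-hand side. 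The cleanest route is to group the integral over consecutive blocks $A_{3j-2}^m \cup A_{3j-1}^m \cup A_{3j}^m$, each of length $\ell_m := (\eta_2-\eta_1)/m$; on the $j$-th block the weight $\ind_{X_1^m} - \tfrac13$ integrates to zero, so that block's contribution is $\kappa\int_{\text{block } j}(\ind_{X_1^m}(y)-\tfrac13)\big(W(x-y)-W(x-y_j)\big)\,\d y$ for any fixed reference point $y_j$ in the block. Since $W$ (as the periodic extension of the smooth, hence Lipschitz, kernel \eqref{eq:kernel}) is Lipschitz on the compact circle $\SSet$ with some constant $\mathrm{Lip}(W)$, each block contributes at most $\kappa\,\mathrm{Lip}(W)\,\ell_m \cdot \tfrac23 \ell_m$ in absolute value (the $\tfrac23$ bounds $|\ind_{X_1^m}-\tfrac13|$ times block length), and summing over the $m$ blocks gives a bound of order $\kappa\,\mathrm{Lip}(W)\,(\eta_2-\eta_1)^2/m$, which tends to $0$ and — crucially — does not depend on $x$. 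This yields the claimed uniformity in $x$. (One should note in passing that the bound does depend on $\eta_1,\eta_2$, but uniformity is only asserted in $x$, so this is fine.)

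For part 2, assume $\Delta\in(0,L)$ with $3h = \kappa\int_0^\Delta W(y)\,\d y$. Then by definition
\[
  J_{\mathrm b}(\Delta,0,\Delta) = \frac{\kappa}{3}\int_0^\Delta W(\Delta-y)\,\d y
  = \frac{\kappa}{3}\int_0^\Delta W(z)\,\d z = h,
\]
by the substitution $z=\Delta-y$ and the evenness of $W$ (which, being a sum of Gaussians centred at $0$, extended periodically, is even on $\SSet$). Likewise
\[
  J_{\mathrm b}(0,0,\Delta) = \frac{\kappa}{3}\int_0^\Delta W(-y)\,\d y
  = \frac{\kappa}{3}\int_0^\Delta W(y)\,\d y = h,
\]
again by evenness. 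This closes the proof.

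The main obstacle is getting the block-by-block estimate in part 1 genuinely uniform in $x$: the naive bound $\int|\ind_{X_1^m}-\tfrac13|\,|W(x-\cdot)|$ does not go to zero, so one must exploit the cancellation $\int_{\text{block}}(\ind_{X_1^m}-\tfrac13) = 0$ together with the Lipschitz modulus of continuity of $W$, and the key point is that $\mathrm{Lip}(W)$ is a single constant independent of $x$ because $\SSet$ is compact. A minor technical care point is the periodic extension: one should confirm that $W$ inherits Lipschitz continuity and evenness from $w$ on the circle, which is immediate since $w$ is smooth and even and $2L$-periodisation preserves both properties.
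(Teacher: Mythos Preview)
Your proof is correct and rests on the same analytic point as the paper's: uniform continuity of $W$ on the compact circle $\SSet$ forces the ``every third strip'' integral to converge, uniformly in $x$, to one third of the full integral. The paper organises part~1 slightly differently: it introduces the three auxiliary sums
\[
  I_k^m(x) = \sum_{j=1}^m \int_{A^m_{3j+k-1}} W(x-y)\,\d y, \qquad k\in\{-1,0,1\},
\]
observes that $I_{-1}^m + I_0^m + I_1^m \equiv \tfrac{3}{\kappa} J_{\mathrm b}$ identically in $m$, and then shows $|I_1^m - I_0^m|\to 0$ and $|I_{-1}^m - I_0^m|\to 0$ uniformly by translating one strip and bounding with the modulus of continuity $\omega$ of $W$. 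Your direct block-cancellation estimate is a repackaging of the same mechanism, with the mild bonus of an explicit $O(1/m)$ rate via the Lipschitz constant rather than a qualitative $\omega((\eta_2-\eta_1)/(3m))\to 0$. For part~2 the arguments coincide; one cosmetic remark: the substitution $z=\Delta-y$ alone already gives $J_{\mathrm b}(\Delta,0,\Delta)=h$, so evenness of $W$ is only needed for the $J_{\mathrm b}(0,0,\Delta)$ identity.
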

\begin{proof}
  We fix $\eta_1<\eta_2$ and consider the $2L$-periodic continuous mapping $ x \mapsto
  J_\mathrm{b}(x,\eta_1,\eta_2)$, defined on $\SSet$. We aim to prove that $J_m \to
  J_\mathrm{b}$
  uniformly in $\SSet$. We pose
  \begin{align*}
    & I^m_{-1}(x) = \sum_{j=1}^m \int_{A_{3j-2}} W(x-y)\, \d y, \\
    & I^m_0 (x)  = \sum_{j=1}^m \int_{A_{3j-1}} W(x-y)\, \d y, \\
    & I^m_1 (x)  = \sum_{j=1}^m \int_{A_{3j}} W(x-y)\, \d y,
  \end{align*}
  for all $x \in \SSet$, $m \in \NSet$. Since the intervals $\{ A_i^m \}_{i=1}^{3m}$
  form a partition of $[\eta_1,\eta_2)$ we have
  \begin{equation}\label{eq:sumInt}
    \frac{3}{\kappa} J_\mathrm{b}(x) = I^m_{-1}(x) + I^m_{0}(x) + I^m_{1}(x) \quad 
    \text{for all $x \in \SSet$, $m \in \NSet$}.
  \end{equation}
  Since $W$ is continuous on the compact set $\SSet$, it is also uniformly continuous
  on $\SSet$. Hence, there exists a modulus of continuity $\omega$ of $W$:
  \[
  \omega(r) = \sup_{
		  \substack{ p,q \in \SSet \\
		  |p - q| \leq r} }
		  | W(p) - W(q) |, 
   \qquad
   \textrm{with}
   \lim_{r \to 0^+} \omega(r) = \omega(0) = 0.
  \]
  We use $\omega$ to estimate $\vert I_1^m (x) - I_0^m (x) \vert$ as follows:
  \[
  \begin{split}
    \vert I_1^m (x) - I_0^m (x) \vert 
    & \leq
      \sum_{j=1}^m
      \bigg\vert 
	   \int_{A_{3j}} W(x-y) \, dy
	  - \int_{A_{3j-1}} W(x-y) \, dy
      \bigg\vert \\
    & = \sum_{j=1}^m
      \bigg\vert 
      \int_{\eta_1 + \frac{3j-1}{3m} (\eta_2-\eta_1)}
          ^{\eta_1 + \frac{3j}{3m} (\eta_2-\eta_1)} W(x-y) \, dy
      -
      \int_{\eta_1 + \frac{3j-2}{3m} (\eta_2-\eta_1)}
          ^{\eta_1 + \frac{3j-1}{3m} (\eta_2-\eta_1)} W(x-y) \, dy
      \bigg\vert \\
    & = \sum_{j=1}^m
      \bigg\vert 
      \int_{\eta_1 + \frac{3j-1}{3m} (\eta_2-\eta_1)}
          ^{\eta_1 + \frac{3j}{3m} (\eta_2-\eta_1)}
          W(x-y) - 
	  W\bigg(x-y+ \frac{\eta_2 - \eta_1}{3m} \bigg)
	  \, dy
      \bigg\vert \\
    & \leq \sum_{j=1}^m
      \int_{\eta_1 + \frac{3j-1}{3m} (\eta_2-\eta_1)}
          ^{\eta_1 + \frac{3j}{3m} (\eta_2-\eta_1)}
	  \bigg\vert W(x-y) - 
	  W\bigg(x-y+ \frac{\eta_2 - \eta_1}{3m} \bigg) 
	  \bigg\vert
      \, dy \\
    & \leq \sum_{j=1}^m
      \int_{\eta_1 + \frac{3j-1}{3m} (\eta_2-\eta_1)}
          ^{\eta_1 + \frac{3j}{3m} (\eta_2-\eta_1)}
      \omega\bigg(\frac{\eta_2 - \eta_1}{3m} \bigg) \, dy \\
    & =  \omega\bigg(\frac{\eta_2 - \eta_1}{3m} \bigg) \frac{\eta_2-\eta_1}{3} .
  \end{split}
  \]
  We have then $\vert I_1^m (x) - I_0^m (x) \vert \to 0$ as $m \to \infty$ and
  since $\omega\big( (\eta_2-\eta_1)/(3m) \big)$ is independent of $x$, the
  convergence is uniform. Applying a similar argument, we find $\vert I_{-1}^m
  (x) - I_0^m (x) \vert \to 0$ as $m \to \infty$ and using~\eqref{eq:sumInt}, we
  conclude $I^m_1, I^m_0, I^m_{-1} \to J_\mathrm{b}/\kappa$ as $m \to \infty$.
  Since $I^m_1 = J_m/\kappa$, then $J_m \to J_\mathrm{b}$ uniformly for all $x \in \SSet$ and
  $\eta_1,\eta_2 \in \SSet$ with $\eta_1 < \eta_2$, that is, result 1 holds true. 

  By hypothesis $J_\mathrm{b}(0,0,\Delta) = h$ and, using a change of variables under the
  integral and the fact that $W$ is even, it can be shown that $J_\mathrm{b}(\Delta,0,\Delta)
  = h$, which proves result 2.
\end{proof}

\begin{cor}[Bump symmetries]
  Let $\Delta$ be defined as in Proposition~\ref{prop:macroBump}, then
  $J_\mathrm{b}(x+\delta,\delta, \delta+\Delta)$ is a mesoscopic bump for all $\delta \in
  [L, -\Delta + L)$. Such bump is symmetric with respect to the
  axis $x=\delta + \Delta/2$.
\end{cor}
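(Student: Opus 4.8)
\noindent
The plan is to derive the corollary directly from the second assertion of Proposition~\ref{prop:macroBump}, using only the two structural features of $W$ that are already in play: $W$ is $2L$-periodic (being the periodic extension of the kernel~\eqref{eq:kernel}) and $W$ is even. Both the translation covariance of the mesoscopic bump and its reflection symmetry are then encoded by elementary changes of variables in
\[
  J_\mathrm{b}(x,\eta_1,\eta_2) = \frac{\kappa}{3}\int_{\eta_1}^{\eta_2} W(x-y)\,\d y .
\]

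First I would establish translation covariance. For $\delta$ in the indicated range, the substitution $y=\delta+s$ yields
\[
  J_\mathrm{b}(x+\delta,\delta,\delta+\Delta)
  = \frac{\kappa}{3}\int_0^\Delta W(x-s)\,\d s
  = J_\mathrm{b}(x,0,\Delta),
\]
the change of variables being legitimate on $\SSet=\RSet/2L\ZSet$ precisely because $W$ is $2L$-periodic; equivalently, the profile $J_\mathrm{b}(\blank,\delta,\delta+\Delta)$ with activity set $[\delta,\delta+\Delta]$ is the rigid $\delta$-translate of the bump constructed in the proposition, i.e. $J_\mathrm{b}(x,\delta,\delta+\Delta)=J_\mathrm{b}(x-\delta,0,\Delta)$. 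Combining this with $J_\mathrm{b}(0,0,\Delta)=J_\mathrm{b}(\Delta,0,\Delta)=h$ from Proposition~\ref{prop:macroBump}, the translated profile attains the threshold value $h$ exactly at the translated endpoints $\delta$ and $\delta+\Delta$ of its activity set, which is the self-consistency property characterising a mesoscopic bump. The restriction on $\delta$ serves only to keep the shifted activity set $[\delta,\delta+\Delta]$ inside one fundamental domain of $\SSet$ and to enumerate this one-parameter family of bumps without repetition (a shift by $2L$ returns the same profile on the ring).

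It then remains to prove the reflection symmetry about the axis $x=\delta+\Delta/2$. By the translation identity just obtained it is enough to treat $\delta=0$, i.e. to show $J_\mathrm{b}(\Delta/2+s,0,\Delta)=J_\mathrm{b}(\Delta/2-s,0,\Delta)$ for all $s\in\SSet$. Here I would substitute $y=\Delta-z$ in one of the two integrals: this maps $[0,\Delta]$ onto itself while reflecting the integration variable about $\Delta/2$, and the evenness of $W$ then matches the two expressions. This is exactly the computation already used to pass from $J_\mathrm{b}(0,0,\Delta)=h$ to $J_\mathrm{b}(\Delta,0,\Delta)=h$ in the proof of Proposition~\ref{prop:macroBump}. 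Transporting back along $x\mapsto x-\delta$ yields symmetry about $x=\delta+\Delta/2$ for general $\delta$.

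I do not expect a genuine obstacle: the statement is a corollary and the argument is a pair of routine substitutions. The only points needing a little care are (i) verifying that the changes of variables are valid on the ring, which is where the $2L$-periodicity of $W$ is used, and (ii) claiming no more than Proposition~\ref{prop:macroBump} delivers — that is, asserting only the threshold-crossing equalities at $\delta$ and $\delta+\Delta$, and not a full characterisation of the super-level set $\{x : J_\mathrm{b}(x,\delta,\delta+\Delta)\ge h\}$.
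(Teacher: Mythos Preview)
Your proposal is correct and follows essentially the same approach as the paper, which simply states that the assertion is obtained ``using a change of variables in the integral defining $J_\mathrm{b}$ and noting that $W$ is even.'' Your write-up supplies the explicit substitutions (the shift $y=\delta+s$ for translation covariance and the reflection $y=\Delta-z$ for the symmetry) and is careful about what is actually being claimed, but the underlying argument is identical.
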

\begin{proof}
  The assertion is obtained using a change of variables in the integral
  defining $J_\mathrm{b}$ and noting that $W$ is even.
\end{proof}

The results above show that, $\xi^m_i \to \eta_i$ as $m \to \infty$, hence we lose
the distinction between width of the microscopic pattern, $\eta_2-\eta_1$, and width
of the mesoscopic pattern, $\xi^m_2 - \xi^m_1$, in that result 2 establishes
$J_\mathrm{b}(\eta_i,\eta_1,\eta_2)=h$, for $\eta_1=0$, $\eta_2=\Delta$. With reference to
Figure~\ref{fig:analyticBump}, the factor $1/3$ appearing in the expression 
for $J_\mathrm{b}$
confirms that, in the limit of infinitely many strips, only a third of the
intervals $\{A_j^m\}_j$ contribute to the integral. In addition, the formula
for $J_\mathrm{b}$ is useful for practical computations as it allows us
to determine the width, $\Delta$, of
the bump.

\begin{rem}[Permuting intervals $A^m_i$]\label{rem:perm}
  A bump can also be found if the partition $\{X^m_k\}$ of $u_m$ is less
  regular than the one depicted in Figure~\ref{fig:analyticBump}. In
  particular, Proposition~\ref{prop:macroBump} can be extended to a more general case of
  permuted intervals. More precisely, if we consider permutations, $\sigma_j$, of the
  index sets $\Set{3j-2,3j-1,3j}$ for $j=1,\ldots,m$ and construct partitions
  \[
  X^m_{-1} = \bigcup_{j = 1}^{m} A^m_{\sigma_j(3j-2)}, \quad
  X^m_0 = [-L,0) \cup [\Delta,L) \bigcup_{j = 1}^{m} A^m_{\sigma_j(3j-1)}, \quad
  X^m_1 = \bigcup_{j = 1}^{m} A^m_{\sigma_j(3j)},
  \]
  then the resulting $J_m$ converges uniformly to $J_\mathrm{b}$ as $m \to \infty$. The proof
  of this result follows closely the one of Proposition~\ref{prop:macroBump} and is
  omitted here for simplicity.
\end{rem}

\subsection{Bump stability}
Once a bump has been constructed, its stability can be studied by employing standard
techniques used to analyse neural field models~\cite{Bressloff2012o}. We consider the map
\[
\Psi_\mathrm{b} \colon \SSet^2 \times \SSet^2 \to \RSet^2, \qquad
( \xi, \eta )
\mapsto
\begin{bmatrix}
  J_\mathrm{b}(\xi_1, \eta_1, \eta_2) - h \\
  J_\mathrm{b}(\xi_2, \eta_1, \eta_2) - h 
\end{bmatrix}
.
\]
and study the implicit evolution 
\begin{equation}\label{eq:phiBump}
  \Psi_\mathrm{b}(\xi(t+1),\xi(t)) = 0.
\end{equation}
The motivation for studying this evolution comes from
Proposition~\ref{prop:macroBump}, according to which the macroscopic bump
$\xi_*=(0,\Delta)$ is an equilibrium of~\eqref{eq:phiBump}, that is, $\Psi_\mathrm{b}(\xi_*,\xi_*) = 0$.
To determine coarse linear stability, we study how small perturbations of $\xi_*$
evolve according to the implicit rule~\eqref{eq:phiBump}.
We set $\xi(t) = \xi_* + \eps \widetilde{\xi}(t)$, for $0 < \epsilon \ll 1$ with $\widetilde{\xi}_i 
= \mathcal{O}(1)$ and expand~\eqref{eq:phiBump} around $(\xi_*,\xi_*)$, retaining
terms up to order
$\eps$,
\[
\Psi_\mathrm{b}(\xi_*+\eps \widetilde{\xi}(t+1),
     \xi_*+\eps \widetilde{\xi}(t)) =
     \Psi_\mathrm{b}(\xi_*, \xi_*) + \eps D_\xi \Psi_\mathrm{b}(\xi_*, \xi_*) \widetilde{\xi}(t+1) + 
                           \eps D_\xi \Psi_\mathrm{b}(\xi_*, \xi_*) \widetilde{\xi}(t). 
\]
Using the classical ansatz $\tilde \xi(t) = \lambda^t v$, with $\lambda \in
\CSet$ and $v\in\SSet^2$, we obtain the eigenvalue problem
\begin{equation}\label{eq:macroBumpStab}
  \lambda
\begin{bmatrix}
          v_1 \\
          v_2 
\end{bmatrix}
 =
 \frac{1}{W(0)-W(\Delta)}
\begin{bmatrix}
          -W(0)      & W(\Delta) \\
           W(\Delta) & -W(0) \\
\end{bmatrix}
\begin{bmatrix}
          v_1 \\
          v_2 
\end{bmatrix},
\end{equation}
with eigenvalues and eigenvectors given by
\begin{align*}
  & \lambda_1 = \frac{W(\Delta) - W(0)}{ W(0) - W(\Delta)}=-1, 
               & v^1 = (1,1)^T, \\
  & \lambda_2 = \frac{-W(0) - W(\Delta)}{ W(0) - W(\Delta)}, 
	       & \qquad v^2 = (-1,1)^T.
\end{align*}
As expected, we find an eigenvalue with absolute value equal to $1$,
corresponding to a pure translational eigenvector. The remaining eigenvalue,
corresponding to a compression/extension eigenvector, determines the stability
of the macroscopic bump. The parameters $A_i$, $B_i$ in Equation~\eqref{eq:kernel} 
are such that $W$ has a global maximum at $x=0$, with $W(0)>0$. Hence, the eigenvalues
are finite real numbers and the pattern is stable if $W(\Delta)<0$. We will present
concrete bump computations in Section~\ref{sec:numerics}.

\subsection{Multi-bump solutions}
The discussion in the previous section can be extended to the case of solutions
featuring multiple bumps. For simplicity, we will discuss here solutions
with $2$ bumps, but the case of $k$ bumps follows straightforwardly. The starting
point is a microscopic structure similar to~\eqref{eq:Xm}, with two disjoint
intervals $[\eta_1,\eta_2),[\eta_3, \eta_4) \subset \SSet$ each subdivided into $3m$
subintervals. We form the vector $\eta=\{\eta_i\}_{i=1}^4$ and have
\[
  \kappa \int_{X^m_1} W(x-y)\, dy 
  = 
  \sum_{j=1}^2 J_{m}(x,\eta_{2j-1},\eta_{2j})  
  \to \sum_{j=1}^2 J_\mathrm{b}(x,\eta_{2j-1},\eta_{2j}),
\]
as $m \to \infty$ uniformly in the variable $x$ for all $\eta_1, \ldots, \eta_4 \in
\SSet$ with $\eta_1 < \ldots < \eta_4$. In the expression above, $J_m$ and
$J_\mathrm{b}$ are
the same functions used in Section~\ref{subsec:macroBump} for the single bump. In analogy
with what was done for the single bump, we consider the mapping defined by
\[
\Psi \colon \SSet^4 \times \SSet^4 \to \RSet^4, \qquad
( \xi, \eta )
\mapsto
\bigg\{
  -h + \sum_{j=1}^2 J_\mathrm{b}(\xi_i, \eta_{2j-1}, \eta_{2j})
\bigg\}_{i = 1}^4
.
\]

Multi-bump solutions can then be studied as in Section~\ref{sec:MacroBump}. 
We present here the results for a multi-bump for $L=\pi$ with threshold crossings
given  by
\begin{equation}\label{eq:multiXi}
\xi_* = 
\frac{1}{2}
\begin{bmatrix}
-\pi - \Delta \\
-\pi + \Delta \\
 \pi - \Delta \\
 \pi + \Delta \\
\end{bmatrix},
\end{equation}
where $\Delta$ satisfies
\begin{equation}\label{eq:multiDeltaEq}
J_\mathrm{b}\bigg(\frac{\pi+\Delta}{2},
\frac{
-\pi - \Delta
}{2}
,
\frac{
-\pi + \Delta
}{2}
\bigg) 
+
J_\mathrm{b}\bigg(\frac{\pi+\Delta}{2},
\frac{
\pi - \Delta
}{2}
,
\frac{
\pi + \Delta
}{2}
\bigg) 
= h.
\end{equation}
\begin{figure}
  \centering
  \includegraphics{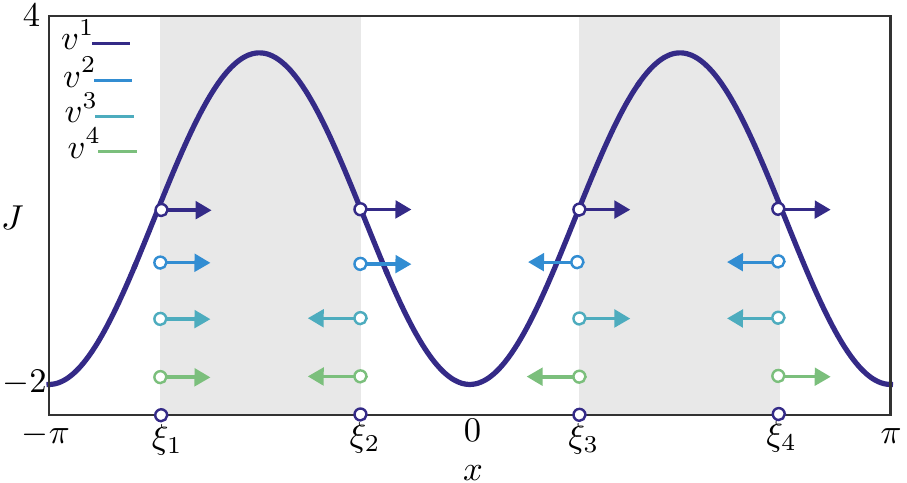}
  \caption{Stable mesoscopic multi-bump obtained for the deterministic model. We also
  plot the corresponding macroscopic bump $\xi_*$
  (Equations~\eqref{eq:multiXi}--\eqref{eq:multiDeltaEq}) and coarse eigenvectors.
  Parameters are $\kappa =
  30$, $h=0.9$, $p=1$, $\beta \to \infty$, with other parameters as in
  Table~\ref{tab:params} }
  \label{fig:multiBumpEigenvectors}
\end{figure}
A quick calculation leads to the eigenvalue problem
\begin{equation}\label{eq:multiBumpStab}
  \lambda
\begin{bmatrix}
          v_1 \\
          v_2 \\
          v_3 \\
          v_4
\end{bmatrix}
 =
 \frac{1}{\alpha}
\begin{bmatrix}
W(0)       & -W(\Delta) & W(\pi) & -W(\pi - \Delta) \\
-W(\Delta) & W(0)       &-W(\pi-\Delta) & W(\pi) \\
W(\pi) & -W(\pi-\Delta) & W(0)  &-W(\Delta) \\
-W(\pi-\Delta) &  W(\pi) & -W(\Delta)  & W(0) \\
\end{bmatrix}
\begin{bmatrix}
          v_1 \\
          v_2 \\
          v_3 \\
          v_4
\end{bmatrix},
\end{equation}
where $\alpha = -W(0) + W(\Delta) - W(\pi) + W(\pi-\Delta)$. The real symmetric
matrix in Equation~\eqref{eq:multiBumpStab} has eigenvalues and eigenvectors given by
\begin{align*}
  & \lambda_1 = \frac{-W(0) + W(\Delta)-W(\pi) + W(\pi-\Delta)}
  { W(0) - W(\Delta) + W(\pi) - W(\pi-\Delta)}=-1, 
               & & v^1 = (1,1,1,1)^T, \\
  & \lambda_2 = \frac{-W(0) + W(\Delta) + W(\pi) - W(\pi-\Delta)}{ W(0) - W(\Delta) + W(\pi) - W(\pi-\Delta)}, 
               & & v^2 = (1,1,-1,-1)^T, \\
  & \lambda_3 = \frac{-W(0)-W(\Delta)-W(\pi)-W(\pi-\Delta)}{ W(0) - W(\Delta) + W(\pi) - W(\pi-\Delta)},
               & & v^3 = (1,-1,1,-1)^T, \\
  & \lambda_4 = \frac{-W(0) - W(\Delta)+W(\pi) + W(\pi-\Delta)}
  { W(0) - W(\Delta) + W(\pi) - W(\pi-\Delta)}, 
               & & v^4 = (1,-1,-1,1)^T. 
\end{align*}
As expected, we have one neutral translational mode. If the remaining 3 eigenvalues
lie in the unit circle, the multi-bump solution is stable. A depiction of this
multi-bump, with corresponding eigenmodes can be found in
Figure~\ref{fig:multiBumpEigenvectors}. We remark that the multi-bump
presented here was constructed imposing particular symmetries (the pattern is
even; bumps all have the same widths). The system may in principle support more generic
bumps, but their construction and stability analysis can be carried out in a similar
fashion.

\section{Travelling waves in the deterministic model} \label{sec:TWDet}
\begin{figure}
  \centering
  \includegraphics{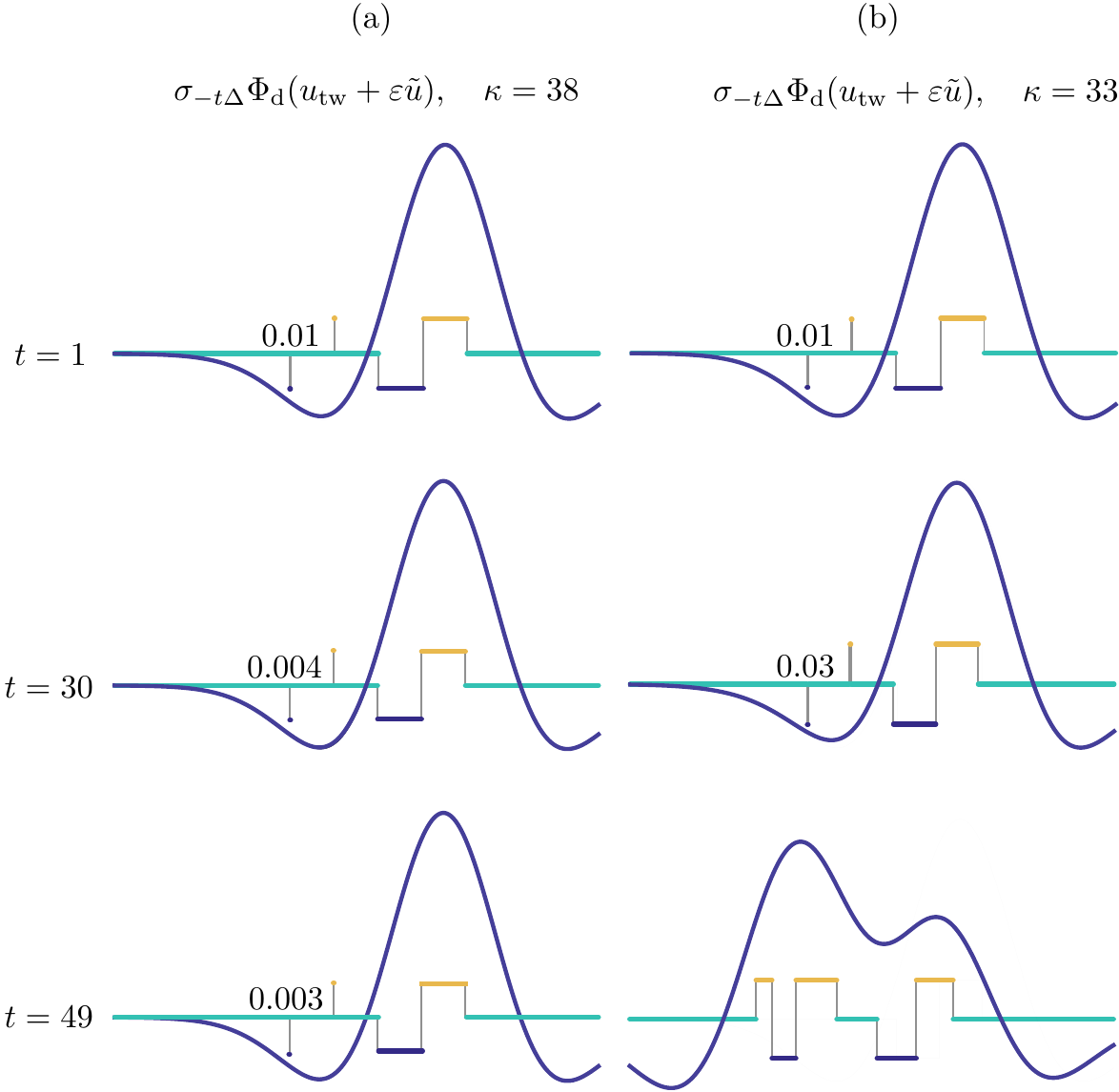}
  \caption{Numerical investigation of the linear stability of the travelling wave of
  the deterministic system, subject to perturbations in the wake of the wave. We
  iterate the map $\Phi_\mathrm{d}$ starting from a perturbed state $u_\mathrm{tw} + \eps
  \tilde u$, where $u_\mathrm{tw}$ is the mesoscopic wave profile of
  Proposition~\ref{prop:TW}, travelling with speed $\Delta$, and $\eps \tilde u$ is
  non-zero only in two intervals of width $0.01$ in the wake of the wave. We plot
  $\sigma_{-t\Delta} \Phi_\mathrm{d}( u_\mathrm{tw} + \eps \tilde u)$ and
  the corresponding macroscopic profile as a function of $t$ and we annotate the width of
  one of the perturbations. (a): For $\kappa=38$, the wave is stable. (b):
  for sufficiently small $\kappa$, the  wave becomes unstable.}
  \label{fig:waveInstability}
\end{figure}
Travelling waves in the deterministic model can also be studied via threshold
crossings, and we perform this study in the present section. We seek a measurable
function $u_\textrm{tw} \colon \SSet \to \USet$ and a constant $c
\in
\RSet$ such that 
\begin{equation}\label{eq:TWDef}
  u(x,t)= u_\textrm{tw}(x-ct) = \sum_{k \in \USet} k \ind_{X^\mathrm{tw}_k}(x-ct)
\end{equation}
almost everywhere in $\SSet$ and for all $t \in \ZSet$. We recall that, in general, a
state $u(x,t)$ is completely defined by its partition, $\{ X^\mathrm{tw}_k(t) \}$.
Consequently, Equation~\eqref{eq:TWDef} expresses that a travelling wave has a
fixed profile $u_\textrm{tw}$, whose partition, $\{ X^\mathrm{tw}_k \}$, does not depend on time.
A travelling wave $(u_\textrm{tw},c)$ satisfies almost everywhere the condition
\[
u_\textrm{tw} = \sigma_{-c} \Phidet(u_\textrm{tw};\gamma),
\]
where $\Phidet$ is the deterministic evolution operator~\eqref{eq:detPhi} and the
shift operator is defined by $\sigma_{x} \colon u(\blank) \mapsto
u(\blank-x)$.
The existence of a travelling wave is now an immediate consequence of the symmetries
of $W$\!, as shown in the following proposition. An important difference with respect
to the bump is that analytical expressions can be found for both microscopic and
mesoscopic profiles, as opposed to Proposition~\ref{prop:macroBump}, which concerns
only the mesoscopic profile.

\begin{prop}[Travelling wave] \label{prop:TW}
  Let $h, \kappa \in \RSet_+$. If there exists $\Delta \in (0,L)$ such that
  $  h = \kappa \int_{\Delta}^{2 \Delta} W(y)\, dy
  $, then
  \[
  u_\mathrm{tw}(z) = \sum_{k \in \USet} k \ind_{X^\mathrm{tw}_k}(z),
  \qquad
  \textrm{with partition}
  \qquad
  \begin{aligned}
    X^\mathrm{tw}_{-1} & = [-2\Delta,-\Delta), \\
    X^\mathrm{tw}_{0}  & = [-L,-2\Delta)\cup[0,L), \\
    X^\mathrm{tw}_{1}  & = [-\Delta,0),
  \end{aligned}
  \]
  is a travelling wave of the deterministic model~\eqref{eq:detPhi} with speed
  $c=\Delta$, associated mesoscopic profile $J_\mathrm{tw}(z) = \kappa
  \int_{-\Delta}^0 W(z-y)\, \d y$ and activity set $X^\mathrm{tw}_\geq= [-2\Delta,\Delta]$.
\end{prop}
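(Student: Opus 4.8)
The plan is to verify directly that the pair $(u_\mathrm{tw},\Delta)$ satisfies the travelling-wave identity $u_\mathrm{tw}=\sigma_{-\Delta}\Phidet(u_\mathrm{tw};\gamma)$ almost everywhere; applying $\sigma_{\Delta}$, this is equivalent to showing that one step of the deterministic map rigidly translates the profile to the right by $\Delta$, i.e. $\Phidet(u_\mathrm{tw};\gamma)=\sigma_{\Delta}u_\mathrm{tw}$ a.e.\ on $\SSet$. Since a state is determined by its partition, it suffices to (i) compute the mesoscopic profile $J_\mathrm{tw}$ and locate its threshold crossings, (ii) read off the active and inactive sets $X^\mathrm{tw}_\geq$, $X^\mathrm{tw}_<$, and (iii) feed $\{X^\mathrm{tw}_k\}$ and $X^\mathrm{tw}_\geq,X^\mathrm{tw}_<$ into the pullback update~\eqref{eq:pullbackModel} and check that the image partition equals $\{X^\mathrm{tw}_k+\Delta\}$, computed on the circle $\SSet=\RSet/2L\ZSet$.

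For step (i), since $X^\mathrm{tw}_1=[-\Delta,0)$, the change of variables $s=z-y$ gives
\[
  J_\mathrm{tw}(z)=\kappa\int_{-\Delta}^{0}W(z-y)\,\d y=\kappa\int_{z}^{z+\Delta}W(s)\,\d s .
\]
Evaluating at $z=\Delta$ yields $J_\mathrm{tw}(\Delta)=\kappa\int_{\Delta}^{2\Delta}W(s)\,\d s=h$ by hypothesis, and at $z=-2\Delta$ the substitution $s\mapsto-s$ together with $W$ even gives $J_\mathrm{tw}(-2\Delta)=\kappa\int_{-2\Delta}^{-\Delta}W(s)\,\d s=\kappa\int_{\Delta}^{2\Delta}W(s)\,\d s=h$. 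Hence $-2\Delta$ and $\Delta$ are threshold crossings of $J_\mathrm{tw}$. Exactly as in the bump construction of Proposition~\ref{prop:macroBump}, and for the Mexican-hat kernel~\eqref{eq:kernel} in the parameter regime of interest, these are the only crossings and $J_\mathrm{tw}\ge h$ between them, so $X^\mathrm{tw}_\geq=[-2\Delta,\Delta]$ and $X^\mathrm{tw}_<=\SSet\setminus[-2\Delta,\Delta]$; I would justify this either from the sign structure of $W$ (the derivative of $z\mapsto\int_z^{z+\Delta}W$ is $W(z+\Delta)-W(z)$) or, as for the bump, take it as part of the ansatz.

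For steps (ii)--(iii), substituting into~\eqref{eq:pullbackModel} gives: the image refractory set is $X^\mathrm{tw}_1=[-\Delta,0)=X^\mathrm{tw}_{-1}+\Delta$; the image spiking set is $X^\mathrm{tw}_0\cap X^\mathrm{tw}_\geq=[0,\Delta]$, which equals $X^\mathrm{tw}_1+\Delta$ up to the single endpoint; and the image quiescent set is $X^\mathrm{tw}_{-1}\cup\bigl(X^\mathrm{tw}_0\cap X^\mathrm{tw}_<\bigr)=[-2\Delta,-\Delta)\cup\bigl([-L,-2\Delta)\cup(\Delta,L)\bigr)=[-L,-\Delta)\cup(\Delta,L)$, which coincides with $X^\mathrm{tw}_0+\Delta=[-L+\Delta,-\Delta)\cup[\Delta,L+\Delta)$ once the wrap-around $L\equiv-L$ on $\SSet$ is used and boundary points (a Lebesgue-null set) are discarded. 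Thus $\Phidet(u_\mathrm{tw};\gamma)=\sigma_{\Delta}u_\mathrm{tw}$ a.e., which is the travelling-wave identity with $c=\Delta$; since $W$ is translation invariant, $\Phidet$ commutes with $\sigma_a$, so by induction the solution launched from $u_\mathrm{tw}$ obeys $u(x,t)=u_\mathrm{tw}(x-t\Delta)$ for all $t\in\ZSet$, and the advertised $J_\mathrm{tw}$ and $X^\mathrm{tw}_\geq$ are precisely those of step~(i).

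The update algebra itself is short; the main obstacle is the bookkeeping on the circle, namely correctly wrapping the translated interval $[\Delta,L+\Delta)$ back into $[-L,L)$ and verifying that every discrepancy between the computed image partition and $\sigma_{\Delta}\{X^\mathrm{tw}_k\}$ is Lebesgue-null, so that the ``almost everywhere'' statement and the integrals defining $J$ are unaffected (and, implicitly, that $2\Delta<L$, otherwise the displayed half-open form of the partition is replaced by its periodic counterpart). A secondary point requiring the same care as in Proposition~\ref{prop:macroBump} is pinning down $X^\mathrm{tw}_\geq$ as exactly the closed interval $[-2\Delta,\Delta]$ rather than merely a set containing its endpoints.
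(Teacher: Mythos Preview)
Your proposal is correct and follows essentially the same approach as the paper: compute $J_\mathrm{tw}$ from $X^\mathrm{tw}_1=[-\Delta,0)$, use the evenness of $W$ to verify the threshold crossings at $z=\Delta$ and $z=-2\Delta$, then apply the pullback update~\eqref{eq:pullbackModel} to check that the image partition is the $\Delta$-translate of the original. You are in fact more explicit than the paper about the circle bookkeeping, the measure-zero endpoint discrepancies, and the tacit assumption that the activity set is exactly $[-2\Delta,\Delta]$; the paper's proof simply states these conclusions without the additional justification you sketch.
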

\begin{proof} The assertion can be verified directly. We have
  \[
  \frac{h}{\kappa} = \int_\Delta^{2\Delta} w(y) dy =  
  \int_{-\Delta}^{0} W(\Delta - y) dy = 
  \int_{-\Delta}^{0} W(-2\Delta - y) dy,
  \]
  hence the activity set for $u_\textrm{tw}$ is $X^\mathrm{tw}_\geq = [-2\Delta,\Delta]$ with mesoscopic
  profile $\kappa \int_{-\Delta}^0 W(z-y)\, \d y$. Consequently,
  $\Phidet(u_\textrm{tw};\gamma)$ has partition
  \begin{align*}
    Y_{-1} & = [-\Delta,0), \\
    Y_{0}  & = [-L,-\Delta)\cup[\Delta,L), \\
    Y_{1}  & = [0,\Delta],
  \end{align*}
  and $u_\textrm{tw} = \sigma_{-\Delta}\Phidet(u_\textrm{tw},\gamma)$ almost everywhere.
\end{proof}

Numerical simulations of the deterministic model confirm the existence of the
mesoscopic travelling wave $u_\textrm{tw}$ in a suitable region of parameter space, as will
be shown in Section~\ref{sec:numerics}. The main difference between
$u_\textrm{tw}$ and the stochastic waves observed in Figure~\ref{fig:exampleWave} is in the
wake of the wave, where the former features quiescent neurons and the latter a
mixture of quiescent and refractory neurons. 

\subsection{Travelling wave stability}
As we will show in Section~\ref{sec:numerics}, waves can be found for sufficiently large
values of the gain parameter $\kappa$. However, when this parameter is below a critical value,
we observe that waves destabilise at their tail. This type of instability is
presented in the numerical experiment of Figure~\ref{fig:waveInstability}.
Here, we iterate the
dynamical system
\begin{equation}\label{eq:TWEvolution}
u(z,t+1) = \sigma_{-\Delta} \Phidet(u(z,t)), \qquad
u(z,0) = u_\textrm{tw}(z) + \eps \widetilde{u}(z),
\end{equation}
where $u_\textrm{tw}$ is the profile of Proposition~\ref{prop:TW}, travelling with
speed $\Delta$, and the perturbation $\eps \widetilde{u}_\textrm{tw}$ is non-zero only in two
intervals of width $0.01$. We deem the travelling wave stable if $u(z,t) \to
u_\textrm{tw}(z)$ as $t \to \infty$. For $\kappa$ sufficiently large, the perturbations
decay, as witnessed by their decreasing width in
Figure~\ref{fig:waveInstability}(a). For $\kappa = 33$, the perturbations grow and the
wave destabilises.
 
To analyse the behaviour of Figure~\ref{fig:waveInstability}, we shall derive the
evolution equation for a relevant class of perturbations to $u_\textrm{tw}$.
This class may be regarded as a generalisation of the perturbation applied in this
figure and is sufficient to capture the instabilities observed in numerical
simulations.
We seek solutions to~\eqref{eq:TWEvolution} with initial condition $u(z,t) = \sum_{k}
k \ind_{X_k(t)}(z)$ with time-dependent partitions
\begin{align*}
  X_{-1}(t) &= \Big[-4\Delta + \delta_1(t),-4\Delta + \delta_2(t) \Big) 
          \cup \Big[-2\Delta + \delta_5(t),- \Delta + \delta_6(t) \Big),
  \\
  \begin{split}
  X_0(t) &= \Big[-L,-4\Delta + \delta_1(t) \Big)
       \cup \Big[-4\Delta + \delta_2(t),-3\Delta + \delta_3(t) \Big)\\
    &\phantom{= \Big(-L,-4\Delta + \delta_1(t) \Big) }
    \, \cup \Big[-3\Delta + \delta_4(t),-2\Delta + \delta_5(t) \Big)
     \cup \Big[\delta_7(t), L \Big),
  \end{split} \\
  X_1(t) &=  \Big[-3\Delta + \delta_3(t),-3\Delta + \delta_4(t) \Big) 
        \cup \Big[ -\Delta + \delta_6(t),\delta_7(t) \Big),
\end{align*}
and activity set $X_\geq(t) = [\xi_1(t),\xi_2(t)]$.
In passing, we note that for $\delta_i =0$, the partition above coincides with
$\{X^\mathrm{tw}_k\}$ in Proposition~\ref{prop:TW}, hence this partition can be used as
perturbation of $u_\textrm{tw}$.
Inserting the ansatz for $u(\xi,t)$ into~\eqref{eq:TWEvolution},
we obtain a nonlinear implicit evolution equation,
$\Psi\big(\delta(t+1),\delta(t)\big)=0$, for the vector $\delta(t)$ as follows
(see Figure~\ref{fig:wavePerturbationSchematic})
\begin{align*}
& \delta_1(t+1) = \delta_3(t) \\
& \delta_2(t+1) = \delta_4(t) \\
&   \int_{-3\Delta + \delta_3(t)}^{-3\Delta + \delta_4(t)}
     w(-2\Delta + \delta_3(t+1) - y) \, dy
  + \int_{-\Delta + \delta_6(t)}^{\delta_7(t)}
     w(-2\Delta + \delta_3(t+1) - y) \, dy = h/\kappa\\
& \delta_4(t+1) = \delta_5(t) \\
& \delta_5(t+1) = \delta_6(t) \\
& \delta_6(t+1) = \delta_7(t) \\
&   \int_{-3\Delta + \delta_3(t)}^{-3\Delta + \delta_4(t)}
     w(\Delta + \delta_7(t+1) - y) \, dy
  + \int_{-\Delta + \delta_6(t)}^{\delta_7(t)}
     w(\Delta + \delta_7(t+1) - y) \, dy = h/\kappa.
\end{align*}

We note that the map above is valid under the assumption $\delta_3(t) <
\delta_4(t)$,
which preserve the number of intervals of the original partition. As
in~\cite{Kilpatrick2010bb}, we note that this prevents us from
looking at oscillatory evolution of $\delta(t)$.
We set $\delta_i(t) = \eps \lambda^t v_i$, retain terms up to first order and obtain
an eigenvalue problem for the matrix
\[
\frac{1}{\alpha}
\begin{bmatrix}
  0 & 0 & \alpha & 0 & 0 & 0 & 0 \\
  0 & 0 & 0 & \alpha & 0 & 0 & 0 \\
  0 & 0 & -w(\Delta) & w(\Delta) & 0    & -w(\Delta) & w(2\Delta) \\
  0 & 0 &         0  &         0 & \alpha &         0  & 0          \\
  0 & 0 &         0  &         0 &    0 &      \alpha  & 0          \\
  0 & 0 &         0  &         0 &    0 &         0  & \alpha       \\
  0 & 0 & w(4\Delta) &-w(4\Delta)&    0 & w(2\Delta) &-w(\Delta) \\
\end{bmatrix},
\]
where $\alpha = w(2\Delta) - w(\Delta)$.
Once again, we have an eigenvalue on the unit
circle, corresponding to a neutrally stable
translation mode. If all other eigenvalues are within the unit circle, then the wave
is linearly stable. Concrete calculations will be presented in
Section~\ref{sec:numerics}.

\begin{figure}
  \centering
  \includegraphics{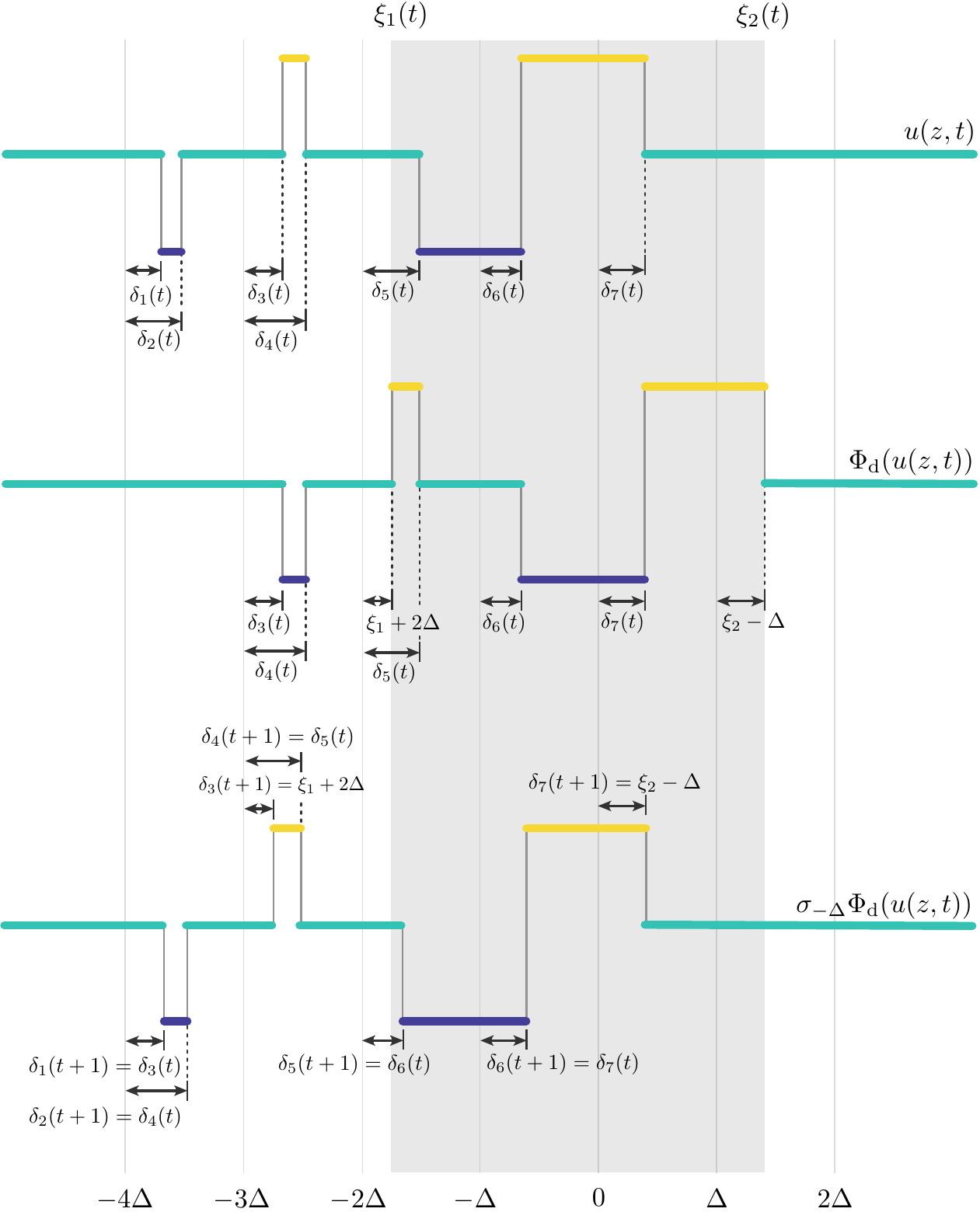}
  \caption{Visualisation of one iteration of the system~\eqref{eq:TWEvolution}: a
  perturbed travelling wave (top) is first transformed by $\Phi_\textrm{d}$
  using the rules~\eqref{eq:pullbackModel} (centre) and then shifted back by an amount
  $\Delta$ (bottom). This gives rise to an implicit evolution equation
  $\Psi\big(\delta(t+1),\delta(t)\big)=0$ for the threshold crossing points of the perturbed
  wave, as detailed in the text.}
  \label{fig:wavePerturbationSchematic}
\end{figure}


\section{Approximate probability mass functions for the Markov chain
model}\label{sec:approxMu}
We have thus far analysed coherent states of a deterministic limit of the Markov chain
model, and we now move to the more challenging stochastic setting. More precisely, we
return to the original model~\eqref{eq:microPhi} and find \textit{approximate} mass
functions for the coherent structures presented in Section~\ref{sec:simulations} (see
Figures~\ref{fig:exampleBump}--\ref{fig:exampleWave}). These approximations will be
used in the lifting procedure of the equation-free framework.

The stochastic model is a Markov chain whose $3^N$-by-$3^N$ transition kernel 
has entries specified by~\eqref{eq:kernel}. It is useful to examine the evolution
of the probability mass function for the state of a neuron at position $x_i$
in the network,  $\mu_k(x_i,t) = \prob\big( u(x_i,t) = k \big)$, $k \in \USet$,
which evolves according to
\begin{equation}\label{eq:fullMuEvolution}
\begin{bmatrix}
\mu_{-1} (x_i,t+1) \\
\mu_0 (x_i,t+1) \\
\mu_1 (x_i,t+1) \\
\end{bmatrix}=
\begin{bmatrix}
1-p & 0 & 1 \\
  p & 1-f(J(u))(x_i,t) & 0 \\
0   &   f(J(u))(x_i,t) & 0
\end{bmatrix}
\begin{bmatrix}
\mu_{-1} (x_i,t) \\
\mu_0 (x_i,t) \\
\mu_1 (x_i,t) \\
\end{bmatrix},
\end{equation}
or in compact notation $\mu(x_i,t+1) = \Pi(x_i,t) \mu(x_i,t)$. We recall that $f$ is
the sigmoidal firing rate and that $J$ is a deterministic function of the random
vector, $u(x,t) \in \USet^N$, via the
pullback set $X^u_1(t)$:
\[
  J(u)(x,t) = \kappa \int_{\XSet} W(x - y) \ind_{X^u_1(t)}(y) \, \d y.
\]
As a consequence, the evolution equation for $\mu(x_i,t)$ is non-local, 
in that $J(x_i,t)$ depends on the  microscopic state of the whole network. 

We now introduce an approximate evolution equation, obtained by posing the problem on
a continuum tissue $\SSet$ and by substituting $J(x,t)$ by its expected value
\begin{equation}\label{eq:muEvolution}
\mu(x,t+1) = \widetilde{\Pi}(x,t) \mu(x,t),
\end{equation}
where $\mu \colon \SSet \times \ZSet \to [0,1]^3$,
\begin{equation}
\widetilde{\Pi}(x,t) = 
\begin{bmatrix}
 1-p & 0 & 1 \\
  p  & 1-f\big(\mathbb{E}[J]\big)(x,t)& 0 \\
 0   &  f\big(\mathbb{E}[J]\big)(x,t) & 0
\end{bmatrix},
\end{equation}
and
\begin{equation}\label{eq:EJ}
  \mathbb{E}[J](x,t) = \kappa \int_\SSet w(x-y) \mu_1(y,t)\, \d y.
\end{equation}
In passing, we note that the evolution equation~\eqref{eq:muEvolution} is
deterministic. We are interested in two types of solutions to~\eqref{eq:muEvolution}:
\begin{enumerate}
  \item A time-independent bump solution, that is a mapping 
    $\mu_\textrm{b}$ such that $\mu(x,t)=\mu_\textrm{b}(x)$ for all $x \in \SSet$ and
    $t\in \ZSet$.
  \item A travelling wave solution, that is, a mapping
    $\mu_\textrm{tw}$ and a real number $c$ such that $\mu(x,t)=\mu_\textrm{tw}(x -
    ct)$ for all $x \in \SSet$ and $t\in \ZSet$.
\end{enumerate}

\subsection{Approximate probability mass function for bumps}\label{sec:approxMuBump}
We observe that, posing $\mu(y,t)=\mu_{\textrm{b}}(y)$
in~\eqref{eq:muEvolution}, we have
\[
\mathbb{E}[J](x) =  \kappa \int_\SSet w(x-y) (\mu_{\textrm{b}})_1(y)\, \d y , 
\]
Motivated by the simulations in Section~\ref{sec:simulations} and by
Proposition~\ref{prop:macroBump}, we seek a solution to~\eqref{eq:muEvolution} in
the limit $\beta \to \infty$, with $\mathbb{E}[J](x) \geq h$ for $x \in [0,\Delta]$,
and $(\mu_\textrm{b})_1(x) \neq 0$ for $x \in [0,\Delta]$, where $\Delta$ is unknown.
We obtain
\[
\mu_\textrm{b}(x) = \widetilde{\Pi}_\textrm{b}(x) \mu_\textrm{b}(x),
\]
where
\[
\begin{split}
  \widetilde{\Pi}_\textrm{b}(x)
  & =
    \begin{bmatrix}
     1-p & 0 & 1 \\
     p   & 1 & 0 \\
     0   & 0 & 0
    \end{bmatrix}
    \ind_{\SSet \setminus [0,\Delta]}(x)
    +
    \begin{bmatrix}
     1-p & 0 & 1 \\
     p   & 0 & 0 \\
     0   & 1 & 0
    \end{bmatrix}
    \ind_{[0,\Delta]}(x)\\
  & =
    Q_<     \ind_{\SSet \setminus [0,\Delta]}(x)
    +
    Q_\geq  \ind_{[0,\Delta]}(x),\\
\end{split}
\]
We conclude that, for each $x \in [0,\Delta]$ (respectively $x \in \SSet \setminus
[0,\Delta]$), $\mu_\textrm{b}(x)$ is the right $\Vert \blank \Vert_1$-unit
eigenvector corresponding to the eigenvalue $1$ of the stochastic matrix $Q_\geq$
(respectively $Q_<$). We find
\begin{equation}\label{eq:muBump}
\mu_\textrm{b}(x) = 
    \begin{bmatrix}
     0 \\
     1 \\
     0
    \end{bmatrix}
    \ind_{\SSet \setminus [0,\Delta]}(x)
    +
    \frac{p}{1+ 2p}
    \begin{bmatrix}
      1/p \\
      1 \\
      1
    \end{bmatrix}
    \ind_{[0,\Delta]}(x)
\end{equation}
and, by imposing the threshold condition $\mathbb{E}[J](\Delta) = h$, we obtain a
compatibility condition for $\Delta$,
\begin{equation}\label{eq:hMuBump}
h = \frac{\kappa p}{1+2p} \int_0^\Delta w(\Delta-y)\, \d y.
\end{equation}
We note that if $p=1$ we have $\mathbb{E}[J](x) = J_\textrm{b}(x,0,\Delta)$ where
$J_\textrm{b}$ is the profile for the mesoscopic bump found in
Proposition~\ref{prop:macroBump}, as expected.

\begin{figure}
  \centering
    \includegraphics{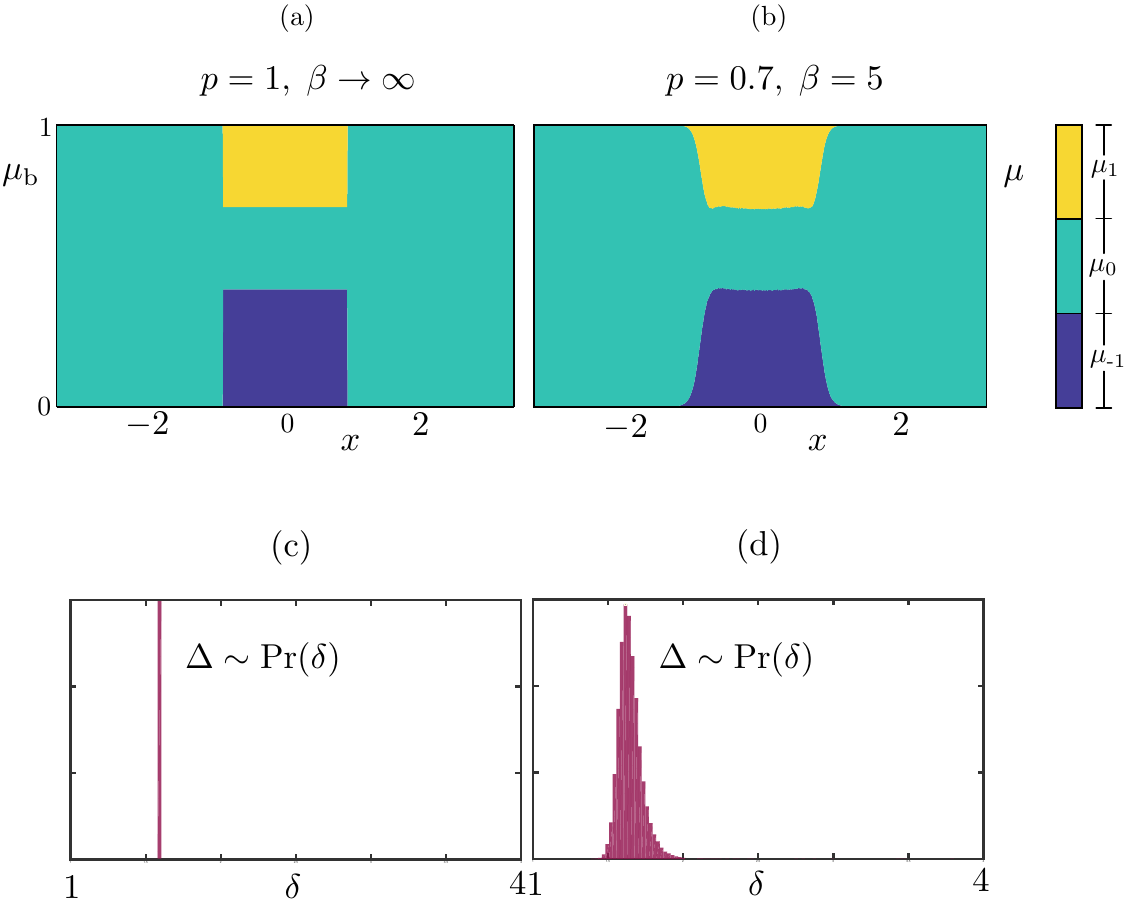}
     \caption{Comparison between the probability mass function
     $\mu_\mathrm{b}$, as computed by~\eqref{eq:muBump}-\eqref{eq:hMuBump}, and the
     observed distribution $\mu$ of the stochastic model.
     (a): We compute the vector $(\mu_\mathrm{b})_k$,  $k\in \USet$ in each strip
     using~\eqref{eq:TWdistr} and visualise the distribution using vertically
     juxtaposed color bars, with height proportional
     to the values $(\mu_\mathrm{b})_k$, as shown in the legend.
     (b): A long simulation of the stochastic model supporting a stochastic bump
     $u(x,t)$ for $t \in [0,T]$, where $T=10^5$. At each time $t>10$ (allowing
     for initial transients to decay), we compute
     $\xi_1(t)$, $\xi_2(t)$, $\Delta(t)$ and then produce histograms for the random
     profile $u(x-\xi_1(t) - \Delta(t)/2,t)$. (c): in the deterministic limit the
     value of $\Delta$ is determined by~\eqref{eq:hMuBump}, hence we have a Dirac
     distribution. (d): the distribution of $\Delta$ obtained in the Markov chain
     model. Parameters are as in Table~\ref{tab:params}.}
     \label{fig:muBumpComparison}
\end{figure}

In Figure~\ref{fig:muBumpComparison}(a), we plot $\mu_\mathrm{b}(x)$
as predicted by~\eqref{eq:muBump}--\eqref{eq:hMuBump}, for $p = 0.7,\,\kappa = 30,\,h=0.9$.
At each $x$, we visualise
$(\mu_\textrm{b})_k$ for each $k \in \USet$ using vertically juxtaposed color bars, with
height proportional to the values $(\mu_\textrm{b})_k$, as shown in the legend.
For a qualitative comparison with direct simulations, we refer the reader to the
microscopic profile $u(x,50)$ shown in the right panel of
Figure~\ref{fig:exampleBump}(a): the comparison suggests that each $u(x_i,50)$ is
distributed according to $\mu_\textrm{b}(x_i)$.

We also compared quantitatively the approximate distribution $\mu_\textrm{b}$
with the distribution, $\mu(x,t)$, obtained via Monte Carlo samples of the
full system~\eqref{eq:fullMuEvolution}. The distributions are obtained from a
long-time simulation of the stochastic model supporting a microscopic bump
$u(x,t)$ for $t \in [0,T]$, with $T=10^5$. At each discrete time
$t$, we compute the mesoscopic profile, $J(u)(x,t)$, the corresponding threshold
crossings and width: $\xi_1(t)$, $\xi_2(t)$, $\Delta(t)$ and then produce
histograms for the random profile $u(x-\xi_1(t) - \Delta(t)/2,t)$. The instantaneous
shift applied to the profile is necessary to pin the wandering bump.

We note a discrepancy between the analytically computed histograms, in which we observe
a sharp transition between the region $x\in[0,\Delta]$ and $x\in\SSet\setminus [0,\Delta]$,
and the numerically computed ones, in which this transition is smoother. This
discrepancy arises because $\Delta(t)$ oscillates around an average value $\Delta$
predicted by~\eqref{eq:hMuBump}; the approximate evolution
equation~\eqref{eq:muEvolution} does not account for these oscillations. This is
visible in the histograms of Figure~\ref{fig:muBumpComparison}(c)-(d), as well as in the
direct numerical simulation~\ref{fig:slaving}(a).

\subsection{Approximate probability mass function for travelling waves}
\label{sec:approxMuWave}
We now follow a similar strategy to approximate the probability mass function for
travelling waves. We pose $\mu(x,t) = \mu_\textrm{tw}(x-ct)$ in the expression for
$\mathbb{E}[J]$, to obtain
\[
\kappa \int_\SSet w(x-y) (\mu_{\textrm{tw}})_1(y-ct)\, \d y  =
\kappa \int_\SSet w(x-ct-y) (\mu_{\textrm{tw}})_1(y)\, \d y  =
\mathbb{E}[J](x-ct).
\]

Proposition~\ref{prop:TW} provides us with a deterministic travelling wave with
speed $c=\Delta$. The parameter $\Delta$ is also connected to the mesoscopic wave
profile, which has threshold crossings $\xi_1= -2\Delta$ and $\xi_2=\Delta$.
Hence, we seek for a solution to~\eqref{eq:muEvolution} in the limit $\beta
\to \infty$, with $\mathbb{E}[J](z) \geq h$ for $x \in [-2c,c]$,
and $(\mu_\textrm{tw})_1(z) \neq 0$ for $z \in [-2c,c]$, where $c$ is
unknown. For simplicity, we pose the problem on a large domain whose size is
commensurate with $c$, that is $\SSet =cT/\RSet$, where $T$ is an even integer
much greater than 1.

We obtain
\[
\sigma_{ct} \mu_\textrm{tw}(z)
=
  \widetilde{\Pi}_\textrm{tw}(z - c(t-1))
  \widetilde{\Pi}_\textrm{tw}(z - c(t-2))
  \cdots
  \widetilde{\Pi}_\textrm{tw}(z) \mu_\textrm{tw}(z),
\]
where
\[\
\widetilde
\Pi_\textrm{tw}(z) =
Q_<     \ind_{\SSet \setminus [-c,c]}(z)
+
Q_\geq  \ind_{[-2c,c]}(z).
\]

To make further analytical progress, it is useful to partition the domain  $\SSet
=cT/\RSet$ in strips of width $c$,
\[
  \SSet = \bigcup_{j=T/2}^{T/2} \, \big[ j c, (j+1) c \big) =
  \bigcup_{j=T/2}^{T/2-1} I_j(c),
\]
and impose that the wave returns back to its original position after $T$ iterations,
$\sigma_{cT} \mu_\textrm{tw}(z)= \mu_\textrm{tw}(z)$, while satisfying
the compatibility condition $h = \mathbb{E}[J](c)$. This leads to the system
\begin{equation}\label{eq:TWdistr}
  \begin{aligned}
  & \mu_\textrm{tw}(z)  = R(z,c) \mu_\textrm{tw}(z) = \Bigg[ \sum_{j=-T/2}^{T/2-1} R_j
  \ind_{I_j(c)}(z) \Bigg] \mu_\textrm{tw}(z), \\
   & \kappa \int_{-2c}^c W(c-y) (\mu_\textrm{tw})_1(y)\, \d y = h.
  \end{aligned}
\end{equation}
With reference to system~\eqref{eq:TWdistr} we note that:
\begin{enumerate}
  \item $R(z,c)$ is constant within each strip $I_j$, hence the probability mass function,
    $\mu_\textrm{tw}(z)$, is also constant in each strip, that is,
    $\mu_\textrm{tw}(z) = \sum_i \rho_i \ind_{I_i(c)}(z)$ for some unknown
    vector $(\rho_{-T/2},\ldots,\rho_{T/2}) \in \SSet^{3T}$.

  \item Each $R_j$ is a product of $T$ $3$-by-$3$ stochastic matrices,
    each equal to $Q_<$ or $Q_\geq$. Furthermore, the matrices $\{ R_j \}$ are computable. For
    instance, for the strip $I_{-1}$ we have
    \[
    R_{-1}=
    \begin{bmatrix}
    (1-p)^{T} + p(1-p)^{T-2}      &  (1-p)^{T-2}   &   (1-p)^{T-1}    \\[0.2em]
    p(1-p)^{T-1} + p^2(1-p)^{T-3} &  p(1-p)^{T-3}  &   p(1-p)^{T-2}   \\[0.2em] 
    1-(1-p)^{T-1}-p(1-p)^{T-3}    &  1-(1-p)^{T-3} &   1-(1-p)^{T-2}  \\
    \end{bmatrix}.
    \]
\end{enumerate}
Consequently, $\mu_\textrm{tw}(z)$ can be determined
by solving the following problem in the unknown
$(\rho_{-T/2},\ldots,\rho_{T/2},c) \in \SSet^{3T} \times \RSet$:
\begin{equation}\label{eq:findRho}
\begin{aligned}
  \rho_i - R_i \rho_i & = 0, \qquad & i = -T/2,\ldots,T/2-1, \\
  - h + \kappa (\rho_{-1})_1 \displaystyle{\int_{-c}^0} W(c-y) \, \d y & = 0. &
\end{aligned}
\end{equation}

\begin{figure}
  \centering
    \includegraphics{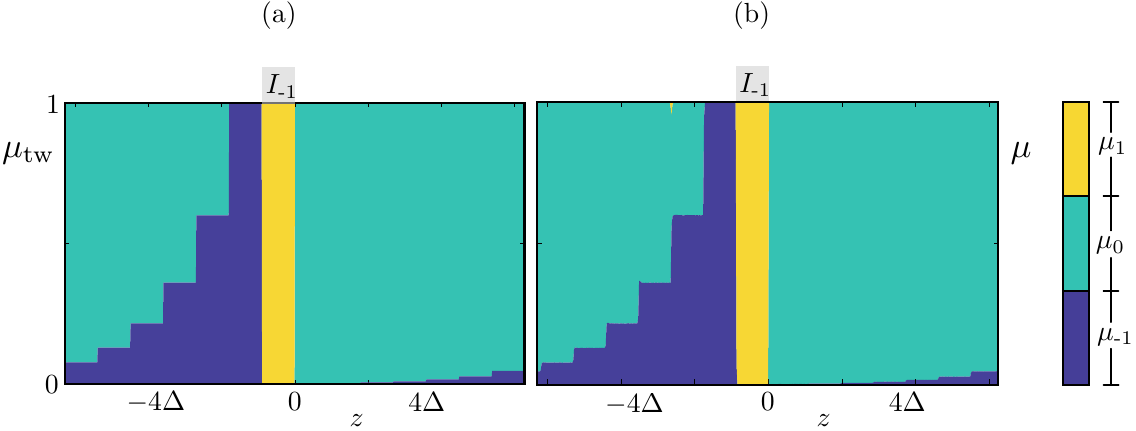}
    \caption{Similarly to Figure~\ref{fig:muBumpComparison}, we compare the
    approximated probability mass function $\mu_\textrm{tw}$, and the observed
    distribution $\mu$ of the stochastic model.
    (a): the probability mass function is approximated using the numerical scheme
    outlined in the main text for the solution of~\eqref{eq:findRho}; the strip
    $I_{-1}$ is indicated for reference. (b): A set of $9\times 10^5$ realisations of the
    stochastic model for a travelling wave are run for $t \in
    [0,T]$, where $T=1000$. For each realisation $s$, we calculate the final threshold
    crossings $\xi^s_1(T)$, $\xi^s_2(T)$, and then compute histograms of
    $u^s(x-\xi^s_2(T),T)$. We stress that the strips in (a) are induced by our
    numerical procedure, while the ones in (b) emerge from the data. The
    agreement is excellent and is preserved across a vast region of parameter
    space (not shown). Parameters are as in Table~\ref{tab:params}.}
    \label{fig:muComparison}
\end{figure}

Before presenting a quantitative comparison between the numerically determined
distribution,
$\mu_\textrm{tw}(z)$, and that obtained via direct time simulations, we
make a few efficiency considerations. In the following sections, it will become
apparent that sampling the distribution $\mu_\textrm{tw}(z)$ for various values of
control parameters, such as $h$ or $\kappa$, is a recurrent task, at the core of the
coarse bifurcation analysis: each linear and nonlinear function evaluation within the
continuation algorithm requires sampling $\mu_\textrm{tw}(z)$, and hence solving the
large nonlinear problem~\eqref{eq:findRho}.

With little effort, however, we can obtain an accurate \textit{approximation} to
$\mu_\textrm{tw}$, with considerable computational savings. The inspiration comes
once again from the analytical wave of Proposition~\ref{prop:TW}. We notice that only
the last equation of system~\eqref{eq:findRho} is nonlinear; the last equation is
also the only one which couples $\{\rho_j\}$ with $c$. When
$p=1$ the wave speed is known as $\beta \to \infty$, $N\to \infty$ and $p=1$
corresponds to the deterministic limit, hence
$\mathbb{E}[J](z) = J_\textrm{tw}(z)$, which implies 
$c = \Delta$ and $(\rho_{-1})_1 = 1$. The stochastic waves observed in direct
simulations for $p\neq 1$, however, display $c \approx (\xi_2 - \xi_1)/3 = \Delta$
and $\mu \approx 1$ in the strip where $J$ achieves a local maximum (see, for
instance Figure~\ref{fig:exampleWave}, for which $p = 0.4$).

The considerations above lead us to the following scheme to approximate
$\mu_\textrm{tw}$: (i) set $c = \Delta$ and remove the last equation
in~\eqref{eq:findRho};  (ii) solve $T$ decoupled $3$-by-$3$ eigenvalue problems to
find $\rho_i$. Furthermore, if $p$ remains fixed in the coarse bifurcation analysis,
$\rho_i$ can be pre-computed and step (ii) can be skipped.

In Figure~\ref{fig:muComparison}(a), we report the approximate $\mu_\textrm{tw}$
found with the numerical procedure described above. An inspection of the microscopic
profile $u(x,45)$ in the right panel of Figure~\ref{fig:exampleWave}(a) shows that
this profile is compatible with $\mu_\textrm{tw}$. We also compared quantitatively
the approximate distribution with the distribution, $\mu(x,t)$, obtained with
Monte Carlo samples of the full system~\eqref{eq:fullMuEvolution}. The distributions
are obtained from $M$ samples $\{u^s(x,t)\}^M_{s=1}$ of the stochastic model for a
travelling wave for $t \in [0,T]$. For each sample $s$, we compute the thresholds,
$\xi^s_1(T)$, $\xi^s_2(T)$, of the corresponding $J(u^s)(x,T)$ and then produce
histograms for $u^s(x-\xi^s_2(T),T)$. This shifting, whose results are reported in
Figure~\ref{fig:muComparison}(b), does not enforce any constant value for the
velocity, hence it allows us to test the numerical approximation
$\mu_\textrm{tw}$. The agreement between the two distributions is excellent: we stress that, while the strips in
Figure~\ref{fig:muComparison}(a) are enforced by our approximation, the ones in
Figure~\ref{fig:muComparison}(b) emerge from the data. We note a slight discrepancy,
in that $\mu_\textrm{tw}(-3\Delta) \approx 0$, while the other distribution shows
a small nonzero probability attributed to the firing state at $\xi=-3\Delta$. Despite
this minor disagreement, the differences between the approximated and observed
distributions remain small across all parameter regimes of note and the
approximations even retain their accuracy as $\beta$ is decreased (not
shown).

\section{Coarse time-stepper}\label{sec:coarseTimeStepper}
As mentioned in the introduction, equation-free methods allow us to compute macroscopic
states in cases in which a macroscopic evolution equation is not available in closed
form~\cite{Kevrekidis2003r,Kevrekidis:2009jo}. To understand the general idea behind
the equation-free framework, we initially discuss an example taken from one of the
previous sections, where an evolution equation \textit{does} exist in closed form. 

In Section~\ref{sec:MacroBump}, we described bumps in a deterministic limit of the
Markov chain model. In this description, we singled out a \textit{microscopic} state
(the function $u_m(x)$ with partition~\eqref{eq:Xm}) and a
corresponding \textit{mesoscopic} state (the function $J_m(x)$), both sketched in
Figure~\ref{fig:analyticBump}. Proposition~\ref{prop:macroBump} shows that there
exists a well defined mesoscopic limit profile, $J_\textrm{b}$, which is determined (up to
translations in $x$) by its threshold crossings $\xi_1=0$, $\xi_2=\Delta$. This
suggests a characterisation of the bump in terms of the \textit{macroscopic} vector
$(\xi_1,\xi_2)$ or, once translation invariance is factored out, in terms of the
\textit{macroscopic} bump width, $\Delta$. Even though the microscopic state $u_m$ is not
an equilibrium of the deterministic system, the macroscopic state $(0,\Delta)$ is a
fixed point of the evolution equation~\eqref{eq:phiBump}, whose evolution operator
$\Psi$ is known in closed form, owing to Proposition~\ref{prop:macroBump}. It is then
possible to compute $\Delta$ as a root of an explicitly available nonlinear equation.

We now aim to use equation-free methods to compute macroscopic equilibria in cases
where we do not have an explicit evolution equation, but only a numerical procedure
to approximate $\Psi$. As mentioned in the introduction, the evolution equation is
approximated using a coarse time-stepper, which maps the macroscopic state at
time $t_0$ to the macroscopic state at time $t_1$ using three stages: lifting,
evolution, restriction. The specification of these stages (the lifting in particular)
typically requires some closure assumptions, which are enforced numerically. In our
case, we use the analysis of the previous sections for this purpose. In the following
section, we discuss the coarse time-stepper for bumps and travelling waves. The multi-bump
case is a straightforward extension of the single bump case.

\subsection{Coarse time-stepper for bumps}\label{sec:bumpLift}

The macroscopic variables for the bump are the threshold crossings $\{\xi_i\}$ of the
mesoscopic profile $J$. The lifting operator for the bump takes as arguments
$\{ \xi_i \}$ and returns a set of microscopic profiles compatible with these
threshold crossings: 
\[
\mathcal{L}_\textrm{b} \colon \SSet^2 \to \USet^{N \times M},
\qquad
(\xi_1,\xi_2)^\mathrm{T} \mapsto
\{ u^s(x) \}_s.
\]

\begin{figure}
  \centering
    \includegraphics{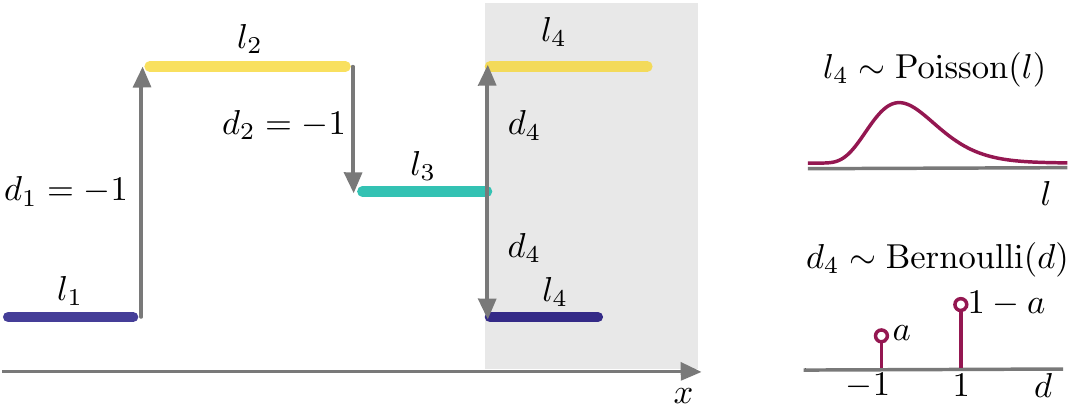}
    \caption{\label{fig:PoissonBernoulliLifting} Schematic representation of the lift
    operator for a bump solution. This figure displays a representation of how
    the states for neurons located within the activity set, $[\xi_1,\xi_2]$, are lifted.
    For illustrative purposes, we assume here that we are midway through the lifting
    operation, where 3 steps of the \emph{while} loop listed in Algorithm~\ref{alg:bumpLift}
    have been completed and a fourth one is being executed (shaded area). The width
    $l_4$ of the next strip is drawn from a Poisson
    distribution. The random variable $d\in\lbrace -1, 1\rbrace$ indicates the
    direction through which we cycle through the states $\lbrace -1, 0, 1 \rbrace$
    during the lifting. The number $d_4$ is drawn from a
    Bernoulli distribution whose average $a$ gives the probablity of changing
    direction. For full details of the lifting operator, please refer to
    Algorithm~\ref{alg:bumpLift}.}
\end{figure}

If $\beta \to \infty$, $u^s(x)$ are samples of the analytical
probability mass function $\mu_\textrm{b}(x + \Delta/2)$, where $\mu_\textrm{b}$ is
given by \eqref{eq:muBump} with $\Delta =\xi_2-\xi_1$. In this limit, a solution
branch may also be traced by plotting~\eqref{eq:hMuBump}. 

If $\beta$ is finite, we either extract samples from the \textit{approximate}
probability mass function $\mu_\textrm{b}$ used above, or we extract samples $u^s(x)$
satisfying the following properties (see Proposition~\ref{prop:macroBump} and
Remark~\ref{rem:perm}):
\begin{enumerate}
  \item $u^s(x)$ is symmetric with respect to the axis $x = (\widetilde{\xi}_1 +
    \widetilde{\xi}_2)/2$, where $\tilde \xi_i = \round(\xi_i)$ and $\round \colon \SSet
    \to \SnSet$.
  \item $u^s(x) = 0$ for all $x \in [-L,\widetilde{\xi}_1) \cup (\widetilde{\xi}_2,L)$.
    \item The pullback sets, $X_1$ and $X_{-1}$, are contained within $[\widetilde{\xi}_1,
      \widetilde{\xi}_2]$ and are unions of a random number of intervals whose
      widths are also random. A schematic of the lifting operator for bumps is
      shown in Figure~\ref{fig:PoissonBernoulliLifting}.
\end{enumerate}
A more precise description of the latter sampling is given in
Algorithm~\ref{alg:bumpLift}. As mentioned in the introduction, lifting operators are
not unique and we have given above two possible examples of lifting.
In our computations, we favour the second sampling method.
The mesoscopic profiles, $J$, generated using this approach are well-matched
to $\mathbb{E}[J]$ produced by the analytically derived probability mass
functions \eqref{eq:muBump}.
Numerical experiments demonstrate that this method is better than the first
possible lifting choice at continuing unstable branches.
This is most likely due to the fact that the latter method slightly
overestimates the probability of neurons within the bump to be in the
spiking state, and underestimates that of them being in the refractory state
and this helps mitigate the problems encountered when finding unstable states
caused by the combination of the
finite size of the network and non-smooth characteristics of the model
(when $\beta$ is high).

\begin{algorithm}
  \caption{\label{alg:bumpLift}Lifting operator for bump}
  \SetAlgoNoLine
  \DontPrintSemicolon
   \SetKwInOut{Input}{Input}
   \SetKwInOut{Output}{Output}
   \SetKwInOut{Suggestions}{Suggestions}
   \SetKwInOut{Comments}{Comments}
 
   \Input{Threshold crossings $\xi_1,\xi_2$; Average number of strips, $m$; Average
   for the Bernoulli distribution, $a$; Width of the domain, $L$.}
   \Output{Profiles $u^1(x),\ldots,u^M(x)$}
  \Comments{The profiles $u^s(x)$ are assumed to be symmetric around $x=(\xi_1 +
    \xi_2)/2$. The operator \emph{round} rounds a real number to a computational grid with
  stepsize $\delta x = 2L/N$.
  }
   \textbf{Pseudocode}:\;
    \For{$s = 1,M$}{
      Set $u^s(x) = 0$ for all $x \in [-L,\xi_1) \cup (\xi_2,L)$\;
      Set $d=-1$\;
      $x = \textrm{round}(\xi_1)$, $u^s(x) = 1$\;
      \While(){$x \leq (\xi_1 + \xi_2)/2$}{
	Select random width $l \sim \textrm{Poisson}( (\xi_2-\xi_1)/m)$\;
	Select random increment $b \sim \textrm{Bernoulli}(a)$,
  $d = \left(d + 2b + 1\right) \!\!\mod 3 - 1$\;
	\For{$j = 1,l$}{
	  Update $x = x + \delta x$\;
	  \If{$x \leq (\xi_1 + \xi_2)/2$}{
	    \eIf($u^i$ changes value at the next grid point) {$j=l$}{
	      $u^s(x+\delta x) =  (u^s(x) + d + 1) \!  \mod 3 - 1$\;
	    }($u^s$ remains constant at the next grid point){
	      $u^s(x+\delta x) = u^s(x)$\;
	    }
	    Reflection around symmetry axis, $u(\xi_2 + \xi_1-x) = u(x)$\;
	  }
        }
    }
  }
\end{algorithm}

The evolution operator is given by
\[
\mathcal{E}_T \colon \USet^{N \times M} \to \USet^{N \times M},
\qquad
\{ u^j(x) \}_j \mapsto \{ \varphi_T( u^j(x)) \}_j,
\]
where $\varphi_T$ denotes $T$ compositions of the microscopic evolution
operator~\eqref{eq:microPhi} and the dependence on the control
parameter, $\gamma$, is omitted for simplicity.

For the restriction operator, we compute the average activity set of the profiles.
More specifically, we set
\[
\mathcal{R} \colon  \USet^{N \times M} \to \SSet^2,
\qquad
\{ u^j(x) \}_j \mapsto ( \xi_1, \xi_2 )^\mathrm{T},
\]
where
\[
  \xi_i = \frac{1}{M} \sum_{s=1}^M \xi_i^s, \quad i = 1,2,
\]
and $\xi^s_i$ are defined using a piecewise first-order interpolant $\mathcal{P}^3_N J$
of $J$ with nodes $\{x_i\}_{i=0}^N$,
\begin{align*}
& \xi^s_1 = 
\bigg\{ x \in \SSet \colon \quad \mathcal{P}^3_N J(u^s)(x) = h, \quad  \frac{\d}{\d x} \mathcal{P}^3_N J(u^s)(x) > 0 \bigg\},
\\
& \xi^s_2 = 
\bigg\{ x \in \SSet \colon \quad \mathcal{P}^3_N J(u^s)(x) = h, \quad  \frac{\d}{\d x}
\mathcal{P}^3_N J(u^s)(x) < 0 \bigg\}.
\end{align*}
We also point out that the computation stops if the two sets above are
empty, whereupon, we set $\xi_1^s=\xi_2^s=0$.

The coarse time-stepper for bumps is then given by 
\begin{equation}\label{eq:coarseTimeStepperBump}
\Phi_\textrm{b} \colon \SSet^2 \to \SSet^2,
\qquad
\xi \mapsto ( \mathcal{R} \circ \mathcal{E}_T \circ \mathcal{L}_\textrm{b} )
(\xi),
\end{equation}
where the dependence on parameter $\gamma$ has been omitted.

\subsection{Coarse time-stepper for travelling waves}\label{sec:TWpLift}
In Section~\ref{sec:approxMuBump}, we showed that the probability mass function,
$\mu_\textrm{tw}(z)$, of a coarse travelling wave can be approximated numerically
using the travelling wave of the deterministic model, by solving a simple set of
eigenvalue problems. It is therefore natural to use $\mu_\textrm{tw}$ in the lifting
procedure for the travelling wave. In analogy with what was done for the
bump, our coarse variables $(\xi_1,\xi_2)$ are the boundaries of the activity set
associated with the coarse wave, $X_\geq = [\xi_1, \xi_2]$. We then set
\[
\mathcal{L}_\textrm{tw} \colon \XSet^2 \to \USet^{N \times M},
\qquad
(\xi_1,\xi_2)^T \mapsto
\{ u^s(x) \}_s,
\]
where $\{u^s(x_i)\}_s$ are $M$ independent samples of the probability mass functions
$\mu_\textrm{tw}(x_i)$, with $c=(\xi_2-\xi_1)/3$. The restriction operator for
travelling waves is the same used for the bump. The coarse time-stepper for travelling
waves, $\Phi_\textrm{tw}$, is then obtained as in~\eqref{eq:coarseTimeStepperBump},
with $\mathcal{L}_\textrm{b}$ replaced by $\mathcal{L}_\textrm{tw}$.
\begin{equation}\label{eq:coarseTimeStepperWave}
\Phi_\textrm{tw} \colon \SSet^2 \to \SSet^2,
\qquad
\xi \mapsto ( \mathcal{R} \circ \mathcal{E}_T \circ \mathcal{L}_\textrm{tw} )
(\xi).
\end{equation}

\section{Root finding and pseudo-arclength continuation}
\label{sec:continuation}
Once the coarse time-steppers, $\Phi_\textrm{b}$ and $\Phi_\textrm{tw}$, have been
defined, it is possible to
use Newton's method and pseudo-arclength continuation to compute coarse states,
continue them in one of the control parameters and assess their coarse linear
stability. In this section, we will indicate dependence upon a single parameter
$\gamma \in \RSet$, implying that this can be any of the control parameters
in~\eqref{eq:microPhi}. 

For bumps, we continue in $\gamma$ the nonlinear problem
$F_\textrm{b}(\xi;\gamma)=0$, where
\begin{equation}\label{eq:Fb}
F_\textrm{b} \colon \SSet^2 \times \RSet \to \SSet^2,
\qquad
\xi \mapsto 
\begin{bmatrix}
  \xi_1 \\
  \xi_2 - \big( \Phi_\textrm{b}(\xi; \gamma) \big )_2
\end{bmatrix}.
\end{equation}
A vector $\xi$ such that $F_\textrm{b}(\xi;\gamma)=0$ corresponds to a coarse bump
with activity set $X_\geq = [0,\xi_2]$ and width $\xi_2$, occurring for the parameter
value $\gamma$, that is, we eliminated the translation invariance associated with the
problem by imposing $\xi_1=0$. In passing, we note that it is possible to hardwire
the condition $\xi_1=0$ directly in $F_\textrm{b}$ and proceed to solve an equivalent
1-dimensional system. Here, we retain the 2-dimensional formulation with the explicit
condition $\xi_1=0$, as this makes the exposition simpler.

During continuation, the explicitly unavailable Jacobians
\[
D_\xi F_\textrm{b}(\xi;\gamma) = I -  D_\xi \Phi_\textrm{b}(\xi; \gamma),
\qquad
D_\gamma F_\textrm{b}(\xi;\gamma) = D_\gamma \Phi_\textrm{b}(\xi; \gamma),
\]
are approximated using the first-order forward finite-difference formulas
\begin{align*}
  & \eps D_\xi \Phi_\textrm{b}(\xi; \gamma) \widetilde{\xi} \approx 
  \Phi_\textrm{b}(\xi + \eps \widetilde{\xi}; \gamma) - \Phi_\textrm{b}(\xi; \gamma),  \\
& \eps D_\gamma \Phi_\textrm{b}(\xi; \gamma) \widetilde{\gamma} \approx
\Phi_\textrm{b}(\xi; \gamma + \eps \widetilde{\gamma}) - \Phi_\textrm{b}(\xi; \gamma).
\end{align*}

%
The finite difference formula for $D_\xi \Phi_\textrm{b}$ also defines the Jacobian
operator used to compute stability: for a given solution $\xi_*$ of~\eqref{eq:Fb}, we
study the associated eigenvalue problem
\[
\lambda \xi = D_\xi \Phi_\textrm{b}(\xi_*; \gamma) \xi, \qquad \lambda \in \CSet,
\quad \xi \in \RSet^2.
\]

For coarse travelling waves, we define
\begin{equation}\label{eq:Ftw}
F_\textrm{tw} \colon \SSet^2 \times \RSet^2 \to \SSet^2,
\qquad
\begin{bmatrix}
  \xi \\
  c
\end{bmatrix}
\mapsto
\begin{bmatrix}
   \xi_1\\
   \xi_2    -cT + \big(\Phi_\textrm{b}\big(\xi; \gamma\big) \big)_2 \\
   c - (\xi_2-\xi_1)/3
\end{bmatrix}.
\end{equation}
A solution $(\xi,c)$ to the problem $F_\mathrm{tw}(\xi,c;\gamma) = 0$ corresponds
to a coarse travelling wave with activity set $X_\geq = [0,\xi_2]$ and speed
$\xi_2/3$, that is, we eliminated the translation invariance and imposed a speed
$c$ in accordance with the lifting procedure $\mathcal{L}_\mathrm{tw}$. As for
the bump we can, in
principle, solve an equivalent 1-dimensional coarse problem.

\section{Numerical results}\label{sec:numerics}
\begin{figure}
  \centering
  \includegraphics{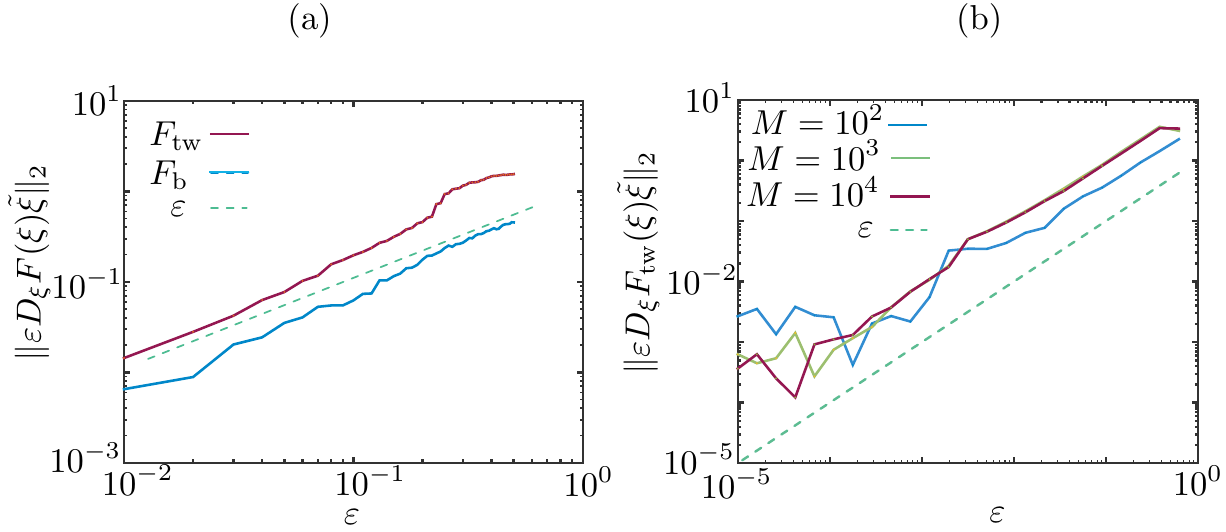}
  \caption{Jacobian-vector product norm as a function of $\eps$. The approximated
    Jacobian-vector products $D_\xi F_\mathrm{b}(\xi) \widetilde{\xi}$ and $D_\xi
    F_\mathrm{tw}(\xi) \widetilde{\xi}$, are evaluated at a coarse bump and a coarse
  travelling wave $\xi$ in a randomly selected direction $\eps \widetilde{\xi}$, where $\Vert \xi
  \Vert_2 = 1$. (a) A single realisation of the deterministic coarse-evolution maps
  is used in the test, showing that the norm of the Jacobian-vector product is
  an $\mathcal{O}(\eps)$, as expected. Parameters: $p=1$, $\kappa=30$, $\beta \to
  \infty$ (Heaviside firing rate), $h=1$, $N=128$, $A_1 = 5.25$, $A_2=5$, $B_1=0.2$,
  $B_2=0.3$. (b) The experiment is repeated for a coarse travelling wave in the
  stochastic setting and for various values of $M$. Parameters as in (a), except
  $p=0.4$.}
  \label{fig:jacobianVersusEpsi}
\end{figure}
\begin{figure}
  \centering
  \includegraphics{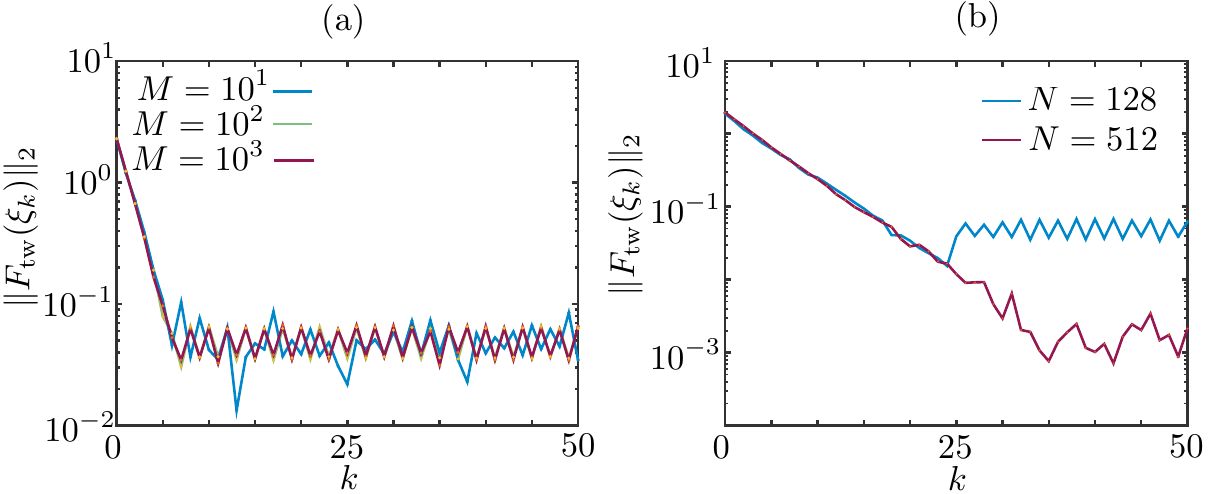}
  \caption{Convergence history of the damped Newton's method applied to the coarse
  travelling wave problem. (a): the method converges linearly, and the achievable
  tolerance does not decrease when the number of realisations $M$ is increased. 
  (b): the achievable tolerance depends on the grid size, or, equivalently, on the
  number of neurons, $N$.}
  \label{fig:newtonConvergence}
\end{figure}
We begin by testing the numerical properties of the coarse time-stepper, the
Jacobian-vector products and the Newton solver used for our computations. 
In Figure~\ref{fig:jacobianVersusEpsi}(a), we evaluate the Jacobian-
vector product of the coarse time stepper with $p=1$, $\beta \to \infty$ for bumps
(waves) evaluated at a coarse bump (wave), in the direction $\eps \widetilde{
\xi}$, where $0 < \eps \ll 1$ and $\widetilde{\xi}$ is a random vector with norm 1. Since
this coarse time stepper corresponds to the deterministic case, we expect the norm of
the Jacobian-vector product to be an $O(\eps)$, as confirmed by the numerical
experiment. In Figure~\ref{fig:jacobianVersusEpsi}(b), we repeat the experiment in
the stochastic setting ($p=0.4$), for the travelling wave case with various number of
realisations.  As expected, the norm of the Jacobian-vector action follows the
$O(\eps)$ curve for sufficiently large $\eps$: the more realisations are employed,
the more accurately the $O(\eps)$ curve is followed. 

\begin{figure}
  \centering
    \includegraphics{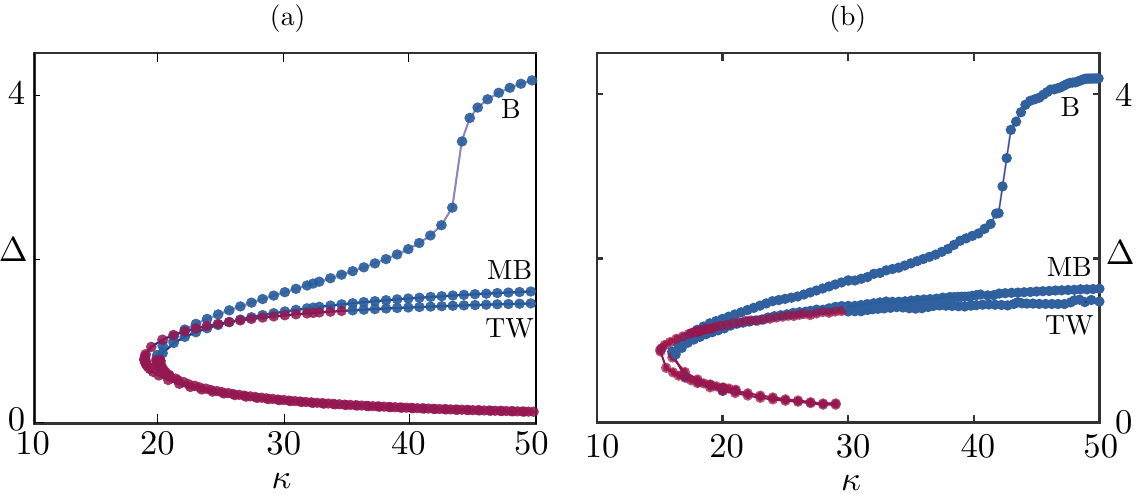}
  \caption{Bifurcation diagrams for bumps (B), multibumps (MB) and travelling
  waves (TW) using $\kappa$ as bifurcation parameter parameter. (a):
  Using the analytical results, we see that bump, multi-bump and travelling
  wave solutions coexist and are stable for sufficiently high $\kappa$ (see main
  text for details). (b): The solution branches found using the
  equation-free methods agree with the analytical results. Parameters as in
  Table~\ref{tab:params} except $h=p=1.0$, $\beta\to\infty$.}
  \label{fig:bifDiagComparison}
\end{figure}

\begin{figure}
  \centering
  \includegraphics[width=4in]{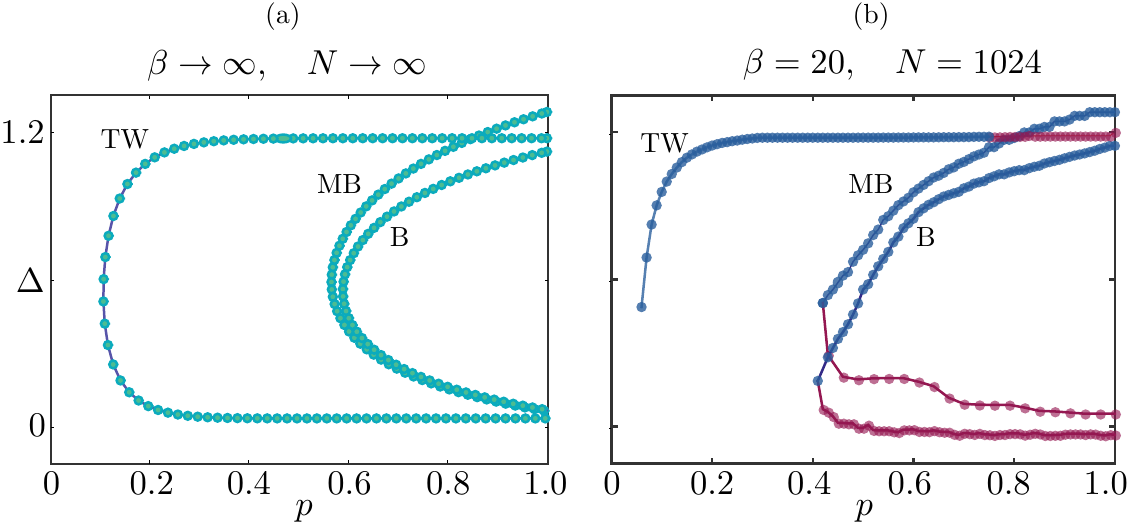}
    \caption{Bifurcation in the control parameter $p$. (a): Existence curves
    obtained analytically; we see that, below a critical value of $p$, only the
    travelling wave exists. (b): The solution branches found using the
    equation-free method agree qualitatively with the analytical results, and we
    can use the method to infer stability. For full details, please refer to the
    text. Parameters as in Table~\ref{tab:params} except $\kappa=20.0,\,h=0.9$,
    with $\beta\to\infty$ for (a) and $\beta=20.0$ for (b).}
  \label{fig:continuationP}
\end{figure}

\begin{figure}
  \centering
    \includegraphics{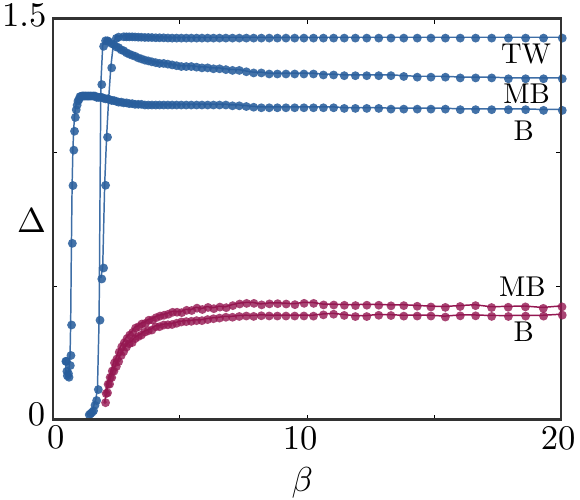}
    \caption{Bifurcation in the control parameter $\beta$. For a
      large range of values, we observe very little change in $\Delta$ as
      $\beta$ is varied. Parameters as in Table~\ref{tab:params} except
      $\kappa=40.0$, $h=0.9$, $p=1.0$. See the main text for full details.}
  \label{fig:continuationBeta}
\end{figure}

We then proceed to verify directly the convergence history of the damped Newton
solver. In Figure~\ref{fig:newtonConvergence}(a), we use a damping factor $0.5$ and
show the residual of the problem as a function of the number of iterations, showing
that the method converges quickly to a solution. At first sight, it is surprising
that the achievable tolerance of the problem does not change when the number of
realisations increases. A second experiment, however, reported in
Figure~\ref{fig:newtonConvergence}(b), shows that this behaviour is caused by
the low system size: when we increase $N$ from $2^7$ to $2^9$, the achievable
tolerance decreases by one order of magnitude.

\subsection{Numerical Bifurcation Analysis}
Gong and Robinson~\cite{Gong2012aa}, and Qi and Gong~\cite{Qi2015aa} found
wandering spots and propagating ensembles using direct numerical simulations on
the plane. Here, we perform a numerical bifurcation analysis with various control
parameters for the structures found in Section~\ref{sec:simulations}
on a one-dimensional domain.

In Figure~\ref{fig:bifDiagComparison}(a), we vary the primary control parameter
$\kappa$, the gain of the convolution term, therefore, we study existence and
stability of the bumps and the travelling pulse when global coupling is varied. This
continuation is performed for a bump, a multiple bump and a travelling pulse in the
continuum deterministic model, using Equations~\eqref{eq:phiBump},
\eqref{eq:multiDeltaEq} and~\eqref{eq:TWEvolution}, respectively.

For sufficiently high $\kappa$, these states coexist and are stable in a large region
of parameter space. We stress that spatially homogeneous mesoscopic states
$J(x) \equiv J_*$, with $0 = J_*$ or $J_* > h$ are also supported by the model, but
are not treated here. Interestingly, the three solution branches are disconnected,
hence the bump analysed in this study does not derive from an instability of the
trivial state. A narrow unstable bump $\Delta \ll 1$ exists for arbitrarily large
$\kappa$ (red branch); as $\kappa$ decreases, the branch stabilises at a saddle-node
bifurcation. At $\kappa\approx 42$, the branch becomes steeper, the maximum of the
bump changes concavity, developing a dimple. On an infinite domain, the branch
displays an asymptote (not shown) as the bump widens indefinitely. On a finite
domain, like the one reported in the figure, there is a maximum achievable width of
the bump, due to boundary effects. The
travelling wave is also initially unstable, but does not stabilise at the saddle node
bifurcation. Instead, the wave becomes stable at $\kappa \approx 33$, confirming the numerical
simulations reported in Figure~\ref{fig:waveInstability}.

In Figure~\ref{fig:bifDiagComparison}(b), we repeat the continuation for the same
parameter values, but on a finite network, using the coarse time-steppers outlined in
Sections~\ref{sec:bumpLift}, \ref{sec:TWpLift}. The numerical procedure returns
results in line with the continuum case, even at the presence of the noise induced by
the finite size. The branches terminate for large $\kappa$ and low $\Delta$: this can
be explained by noting that, if $J(x) \equiv 0$, then the system attains the
trivial resting state $u(x) \equiv 0$ immediately, as no neuron can fire;
on a continuum network, $\Delta$ can be arbitrarily small, hence the branch can be
followed for arbitrarily large $\kappa$; on a discrete network, there is a minimal
value of $\Delta$ that can be represented with a finite grid.

\begin{figure}
  \centering
  \includegraphics{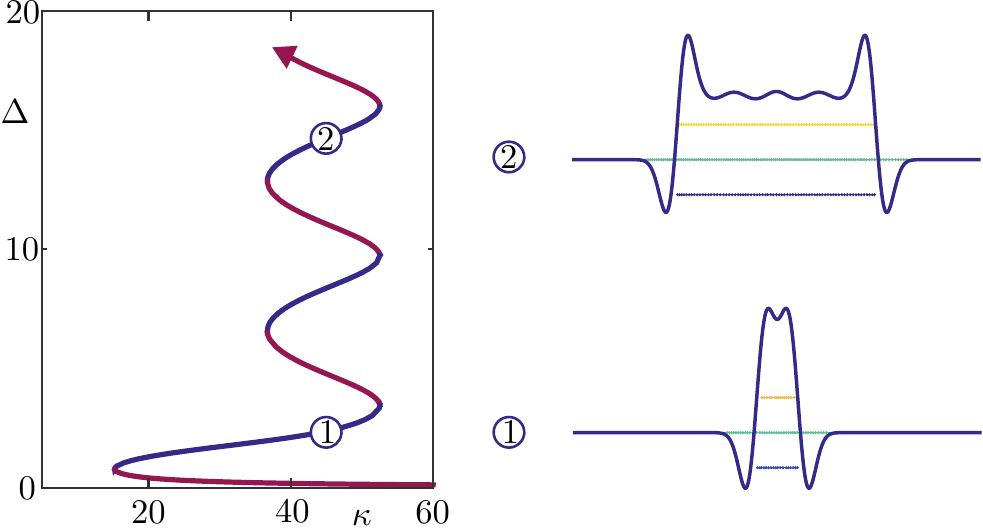}
  \caption{Bifurcation diagram for bumps in a heterogeneous network. To generate
  this figure, we replaced the coupling function with $\widetilde{W}(x,y)=W(x-y) (1+W_0 \cos(y/s))$,
  with $W_0=0.01,\,s=0.5$. We observe the snaking
  phenomenon in the approximate interval $\kappa\in[38,52]$. The branches
  moving upwards and to the right are stable, whereas those moving to the left
  are unstable. The images on the right, obtained via direct simulation, depict
  the solution profiles on the labelled part of the branches. We note the
  similarity of the mesoscopic profiles within the middle of the bump. The
  continuation was performed for the continumm, deterministic model with
  parameters are $\kappa=30$, $h=0.9$.}
  \label{fig:snaking}
\end{figure}

We now consider continuation of solutions in the stochastic model.
In Figure~\ref{fig:continuationP}, we vary the
transition probability, $p$, from the refractory to quiescent state.
In panel (a), we show analytical
results, given by solving \eqref{eq:muBump}-\eqref{eq:hMuBump}, whilst panel
(b) shows results found using the equation-free method.
We find qualitatively similar diagrams in both cases, though we note some
quantitative differences, owing to the finite size of the network and the finiteness
of $\beta$: at the presence of noise, the stationary solutions exist for a wider
region of parameter space (compare the folds in Figure~\ref{fig:continuationP}(a) and
Figure~\ref{fig:continuationP}(b)); a similar situation arises, is also valid for the
travelling wave branches.

The analytical curves of Figure~\ref{fig:continuationP}(a) do not contain any
stability information, which are instead available in the equation-free calculations
of Figure~\ref{fig:continuationP}(b), confirming that bump and multi-bump
destabilise at a saddle-node bifurcation, whereas the travelling wave becomes unstable to
perturbations in the wake, if $p$ is too large. The lower
branch of the travelling wave is present in the analytical results, but not in the
numerical ones, as this branch is not captured by our lifting strategy: when we lift
a travelling wave for very low values of $\Delta$, we have that $J<h$ for all
$x\in \mathbb{S}_N$ and the network attains the trivial state
$u(x) \equiv 0$ in $1$ or $2$ time steps,
thereby the coarse time stepper becomes ineffective, as the
integration time $T$ can not be reduced to $0$.

Gong and co-workers~\cite{Gong2012aa,Qi2015aa} found that refractoriness is a key component to
generating propagating activity in the network. The bifurcation diagram presented
here confirm this, as we recognise 3 regimes: for high $p$ (low refractory time) the
system supports stationary bumps, as the wave is unstable; for intermediate $p$,
travelling and stationary bumps coexist and are stable, while for low $p$ (high
refractory time) the system selects the travelling wave.

In Figure~\ref{fig:continuationBeta}, we perform the same computation now
varying $\beta$, which governs the sensitivity of the transition from quiescence
to spiking. Here, we see that the wave and both bump solutions are stable for
a wide range of $\beta$ values and furthermore, that these states are largely
insensitive to variations in this parameter, implying that the Heaviside limit
is a good approximation for the network in this region of parameter space.

Finally, we apply the framework presented in the previous sections to
study heterogeneous networks. We modulate the synaptic kernel using a harmonic
function, as studied in~\cite{Avitabile2015aa} for a neural field. As
in~\cite{Avitabile2015aa}, the heterogeneity promotes the formation of a
hierarchy of stable coexisting localised bumps, with varying width, arranged in
a classical snaking bifurcation diagram. A detailed study of this bifurcation
structure, while possible, is outside the scope of the present paper.

\section{Discussion}
\label{sec:discussion}
In this article, we have used a combination of analytical and numerical
techniques to study pattern formation in a Markov chain neural network model.
Whilst simple in nature, the model exhibits rich dynamical behaviour, which is
often observed in more realistic neural networks. In
particular, spatio-temporal patterns in the form of bumps have been linked to
working memory~\cite{Funahashi1989aa, Colby1995aa,Goldman-Rakic1995aa}, whilst
travelling waves are thought to be important for plasticity~\cite{Bennett2015aa}
and memory consolidation~\cite{Massimini2004aa,Rasch2013aa}. Overall, our
results reinforce the findings of \cite{Gong2012aa}, namely that refractoriness
is key to generating propagating activity: we have shown analytically and
numerically that
waves are supported by a combination of high gains in the synaptic input
and moderate to long refractory times. For high gains and short refractory
times, the network supports localised, meandering bumps of activity.

The analysis presented here highlights the multiscale nature of the model by
showing how evolution on a microscopic level gives rise to emergent behaviour
at meso- and macroscopic levels. In particular, we established a link
between descriptions of the model at multiple spatial scales: the identified coarse spatiotemporal
patterns have typified and recognisable motifs at the microscopic level, which
we exploit to compute macroscopic patterns and their stability. 

Travelling waves and bumps have almost identical meso- and macroscopic profiles:
if microscopic data were removed from Figure~\ref{fig:exampleBump}(a) and
Figure~\ref{fig:exampleWave}(a), the
profiles and activity sets of these two patterns would be indistinguishable.
We have shown that a disambiguation is however possible if the meso- and
macroscopic descriptions take into account microscopic traits of the patterns:
in the deterministic limit of the system, where mathematical analysis is
possible, the microscopic structure is used in the partition sets of
Propositions~\ref{prop:macroBump} and~\ref{prop:TW}; in the stochastic setting
with Heaviside firing rates and infinite number of neurons, the microscopic
structure is reflected in the approximate probability mass functions appearing
in Section~\ref{sec:approxMu}; in the full stochastic finite-size setting, where
an analytical description is unavailable, the microscopic structure is
hardwired in the lifting operators of the coarse time-steppers
(Section~\ref{sec:coarseTimeStepper}).

An essential ingredient in our analysis is the dependence of the Markov chain
transition probability matrix upon the global activity of the network, via the
firing rate function $f$. Since this hypothesis is used to construct rate models
as Markov processes~\cite{Bressloff:2010jc}, our lifting strategy could be used in
equation-free schemes for more general large-dimensional neural networks. An
apparent limitation of the procedure presented here is its inability to lift
strongly unstable patterns with low activity, as pointed out in
Section~\ref{sec:numerics}. This limitation, however, seems to be specific to
the model studied here: when $\Delta \to 0$, bumps destabilise with transients
that are too short to be captured by the coarse time-stepper.

A possible remedy would be to represent the pattern via a low-dimensional,
spatially-extended, spectral discretisation of the mesoscopic profile
(see~\cite{Laing:2006bg}), which would allow us to represent the synaptic activity
below the threshold $h$. This would lead to a larger-dimensional coarse system,
in which noise would pollute the Jacobian-vector evaluation and the convergence
of the Newton method. Variance-reduction techniques~\cite{Rousset:2013fq} have
been recently proposed for equation-free methods in the context of agent-based
models~\cite{Avitabile2014aa}, and we aim to adapt them to large neural networks
in subsequent publications.

\section*{Acknowledgments} We are grateful to Joel Feinstein, Gabriel Lord and
Wilhelm Stannat for helpful discussions and comments on a preliminary draft of
the paper. Kyle Wedgwood was generously supported by the Wellcome Trust
Institutional Strategic Support Award (WT105618MA)

\bibliographystyle{plain}
\bibliography{references}

\end{document}